%% file: main.tex
\documentclass[11pt]{article}

\usepackage[letterpaper,margin=1.0in]{geometry}

\usepackage{cite}
\input{macros}
\usepackage{appendix}
\usepackage{ifthen}

\newcommand{\FullOrShort}{full}    % change “short” ↔ “full” here

\ifthenelse{\equal{\FullOrShort}{full}}{%
  \newcommand{\fullOnly}[1]{#1}
  \newcommand{\shortOnly}[1]{}     % drop short‐only bits

}{%
  \newcommand{\fullOnly}[1]{}      % drop full‐only bits
  \newcommand{\shortOnly}[1]{#1}   % keep short‐only bits

}

\begin{document}
\date{}
\title{Density-Dependent Graph Orientation and Coloring \\
in Scalable MPC}

\author{
   Mohsen Ghaffari \\
   \small{MIT}\\
   \small{ghaffari@mit.edu}
  \and
   Christoph Grunau \\
   \small{ETH Zurich}\\
   \small{cgrunau@inf.ethz.ch}
 }
\maketitle
 \begin{abstract}
This paper presents massively parallel computation (MPC) algorithms in the strongly sublinear memory regime (aka, scalable MPC) for orienting and coloring graphs as a function of its subgraph density. Our algorithms run in $\poly(\log\log n)$ rounds and compute an orientation of the edges with maximum outdegree $O(\alpha \log\log n)$ as well as a coloring of the vertices with $O(\alpha \log\log n)$ colors. Here, $\alpha$ denotes the density of the densest subgraph. Our algorithm's round complexity is notable because it breaks the $\tilde{\Theta}(\sqrt{\log n})$ barrier, which applied to the previously best known density-dependent orientation algorithm [Ghaffari, Lattanzi, and Mitrovic ICML'19] and is common to many other scalable MPC algorithms. 
\end{abstract}

\newpage
\setcounter{page}{0}
\thispagestyle{empty}
{\small \tableofcontents}

\newpage

\input{body}
\bibliographystyle{alpha}
\bibliography{ref}
\end{document}

%% file: macros.tex
\usepackage{amsmath, amsthm, amsfonts}
\usepackage{thmtools,thm-restate}
\usepackage{bbm, bm}
\usepackage{lipsum}

\usepackage{graphicx}
\usepackage{xcolor}

\usepackage{dsfont}

\usepackage[colorinlistoftodos, textsize=tiny]{todonotes}

\usepackage{framed}
\usepackage[framemethod=tikz]{mdframed}
\usepackage[bottom]{footmisc}
\usepackage{enumitem}
\setitemize{noitemsep,topsep=3pt,parsep=3pt,partopsep=3pt}
\usepackage[font=small]{caption}
\usepackage{xspace}
\usepackage[ruled,vlined]{algorithm2e}

\definecolor{darkgreen}{rgb}{0,0.5,0}
\definecolor{darkblue}{rgb}{0,0,0.6}
\usepackage{hyperref}
\hypersetup{
    unicode=false,          % non-Latin characters in Acrobat’s bookmarks
    colorlinks=true,        % false: boxed links; true: colored links
    linkcolor=darkblue,          % color of internal links (change box color with linkbordercolor)
    citecolor=darkgreen,        % color of links to bibliography
    filecolor=magenta,      % color of file links
    urlcolor=cyan           % color of external links
}
\usepackage[capitalize, nameinlink]{cleveref}
\Crefname{theorem}{Theorem}{Theorems}
\Crefname{lemma}{Lemma}{Lemmas}
\Crefname{claim}{Claim}{Claims}
\Crefname{remark}{Remark}{Remarks}
\Crefname{observation}{Observation}{Observations}

\newtheorem{theorem}{Theorem}[section]
\newtheorem{lemma}[theorem]{Lemma}
\newtheorem{meta-theorem}[theorem]{Meta-Theorem}
\newtheorem{claim}[theorem]{Claim}

\newtheorem{definition}{Definition}[section]

\newcommand{\eps}{\varepsilon}

\newcommand{\poly}{\operatorname{\text{{\rm poly}}}}

\renewcommand{\tilde}{\widetilde}

%% file: body.tex
\section{Introduction}
We present the first $\poly(\log\log n)$-round Scalable Massively Parallel Computation (MPC) algorithms for low out-degree orientation and coloring as a function of \textit{subgraph density} (or equivalently, arboricity). We first review the context and state of the art, and then state our results.

\subsection{Models and Basic Definitions}
\paragraph{Massively Parallel Computation Model.} We work with the massively parallel computation (MPC) model\cite{Karloff2010AMO, Goodrich2011SortingSA, beame2017communication, andoni2014parallel}, which has become de facto the standard theoretical abstraction for large-scale distributed data processing frameworks such as MapReduce\cite{Dean2004MapReduceSD}, Hadoop\cite{White2009HadoopTD}, Spark\cite{Zaharia2010SparkCC} and Dryad\cite{Isard2007DryadDD}. We provide a brief definition next; please see the recent book of Im et. al.~\cite{im2023massively} or the lecture notes of Ghaffari~\cite{GhaffariMPCNotes} for more.

The distributed system is composed of $M$ machines, and the data is distributed among them (arbitrarily). In the case of graph problems, this data is the input graph $G=(V, E)$, for which we also usually use the notations $m=|E|$ and $n=|V|$. Each machine is assumed to have a \textit{local memory} capacity of $S$ words. A word is simply $O(\log n)$ bits and can describe, for instance, a single vertex or edge. We usually refer to the total summation of memories, which is simply $MS$, as the \textit{global memory} of the system. Trivially, we need that $MS=\Omega(m+n)$.  Computation proceeds in synchronous rounds. Per round, each machine can perform some computation on the data it holds and then send some information to other machines. The only communication constraint is that the total amount of data that one machine sends or receives, in a round, cannot exceed its memory capacity $S$. The primary measure of interest is the round complexity of the algorithm. 

The setting where $S$ can be polynomially smaller than $n$ --- concretely $S \leq n^{\delta}$ for a constant $\delta\in(0,1)$ --- has been called the \textit{strongly sublinear memory regime}, and this indicates the most challenging (and also the most desirable) domain for algorithm design. Algorithms in this regime are frequently referred to as \textit{Scalable MPC algorithms}, and this is the focus of the present paper.

\paragraph{LOCAL Model.} Many state-of-the-art (scalable) MPC algorithms for graph problems are designed based on approaches in the more classic LOCAL model of distributed computing\cite{linial1987LOCAL,peleg00}. So we briefly recall this model: the graph $G=(V, E)$ represents an abstraction of the computer network, with one computer per node, and it has an $O(\log n)$-bit identifier. Per round, each computer/node can send one message to each of its neighbors in $G$. At the start, each node knows only its neighbors. At the end of the computation, each node should know its own part of the output, e.g., its color in a coloring. Again the primary measure is the round complexity.

\paragraph{Densest Subgraph Density, and Abroricity.} The \textit{maximum subgraph density} (often, simply \textit{subgraph density}) of a graph $G$ is defined as $\alpha(G) = \max_{S\subseteq V} \{\frac{|E(S)|}{|S|}\}$, where $E(S)$ denotes the set of edges with both endpoints in $S$. A closely related measure is the graph's arboricity $\lambda(G)$, defined as $\lambda(G)=\max_{S\subseteq V} \{\lceil\frac{|E(S)|}{|S|-1}\rceil\}$, which is also equivalent to the the minimum number of forests into which the edges $E$ can be partitioned~\cite{nash1961edge, nash1964decomposition, tutte1961problem}. Note that $\alpha(G) \leq \lambda(G) \leq \alpha(G)+1$. 

\paragraph{Low Out-degree Orientation.} We are interested in edge orientations with small maximum outdegree. In any orientation, the maximum outdegree is lower bounded by the subgraph density $\alpha(G)$. We want our outdegree bound to be close to this. Since $\alpha(G)$ and $\lambda(G)$ are equal up to a $+1$, our results can be stated in terms of either of these, and we use arboricity $\lambda=\lambda(G)$ from now on (also since the special case of $\lambda=1$ nicely coincides with forests).

% A more structured format for the orientation, which is achieved in our results and the prior work, and opens the road for applications in coloring, is as follows: an $L$-layer H-partition with outdegree $d$ is a defined as a partition of the set $V$ of vertices into $L$ layers $H_1, H_2, \dots, H_L$, such that each node $v\in H_i$ has at most $d$ neighbors in $H_{i} \cup \dots \cup H_{L}$. This implies an orientation by orienting edges toward the higher layer (breaking ties arbitrarily, e.g., according to identifiers).

% We will use low-outdegree orientation as a key tool in devising a coloring algorithm with a number of colors similar to the outdegree bound (using some extra properties of our orientations).

\subsection{State of the Art in Scalable MPC and Distributed Algorithms}
\paragraph{LOCAL Algorithms.} A simple LOCAL-model algorithm of Barenboim and Elkin~\cite{barenboim2008sublogarithmic} gives an orientation of edges with outdegree at most $(2+\eps)\lambda$, for any constant $\eps>0$, in $O(\log n)$ rounds: per round, simultaneously remove all nodes of degree at most $(2+\eps)\lambda$ in the remaining graph and mark their edges as outgoing (if both endpoints of an edge are removed, orient the edge arbitrarily, e.g., toward the higher identifier). This $O(\log n)$ round complexity is optimal: even when $\lambda=1$, computing any orientation with $O(1)$ outdegree requires $\Omega(\log n)$ rounds, as implied by a lower bound of Linial~\cite{linial1987LOCAL}. There are also algorithms~\cite{ghaffari2017distributed, su2020distributed} that computes $((1+\eps)\lambda)$-outdegree orientation in $\poly(\log n)$ rounds. One can easily run these LOCAL algorithms, and especially the simple $O(\log n)$ round one, in scalable MPC. However, given the power of the MPC model and the costly nature of each round in the large-scale computation world modeled by MPC, it is imperative to obtain algorithms with a significantly lower round complexity.

\paragraph{Scalable MPC Algorithms.} Ghaffari, Lattanzi, and Mitrovic~\cite[Section 4]{ghaffari2019improved} presented a scalable MPC algorithm that computes an oritentation with outdegree at most $(2+\eps)\lambda$ in $\tilde{O}(\sqrt{\log n})$ \footnote{Here and throughout, we write $f(n)=\widetilde O(g(n))$ to suppress factors polylogarithmic in $n$; that is, $f(n)=\widetilde O(g(n))$ if there exists a constant $k\ge0$ such that $f(n)=O\bigl(g(n)\,(\log n)^k\bigr)$.}  rounds.\footnote{We comment that they state this more generally for \textit{coreness decomposition}, but that's done by simply running the algorithm for every $k=(1+\eps)^{i}$ \textit{coreness/arboricity estimate} in parallel. See \cite[Exercise 2.3]{GhaffariMPCNotes} for a short sketch.} This is based on a fast simulation of the above LOCAL algorithm, using sparsification ideas in the style of the work of Ghaffari and Uitto~\cite{Ghaffari2018SparsifyingDA}. We will later give more explanations about this approach and its $\tilde{\Theta}(\sqrt{\log n})$ bottleneck. To the best of our knowledge, this $\tilde{O}(\sqrt{\log n})$ bound is state-of-the-art round complexity for scalable MPC orientation, in fact even if we allow much higher maximum outdegree bounds, e.g. $\lambda \cdot \poly(\log n)$.

There is one notable exception to the above: in the special case of $\lambda=1$---i.e., when $G$ is simply a forest---a recent scalable MPC algorithm of Grunau et al.\cite{grunau2023conditionally} computes an orientation with outdegree at most $2$ in $O(\log\log n)$ rounds. They used this (and extra properties of their orientation) to present a $3$-coloring for forests in $O(\log\log n)$ rounds. This improved on a prior $4$-coloring of Ghaffari et al.\cite{ghaffari2020improved}. However, these algorithms appear to be inherently limited to the case of forests and critically use that the local neighborhood around each node has no cycle. It is unclear how one can apply these approaches even for the case of $\lambda=2$.

\paragraph{Broader Context---LOCAL vs Scalable MPC, graph exponentiation, and sparisifcation.} There is a close connection between LOCAL algorithms and Scalable MPC algorithms. LOCAL algorithms provide a natural starting point for devising scalable MPC algorithms, and often we can hope to have much faster MPC algorithms, even exponentially faster. Let us elaborate. If we temporarily ignore the local memory constraint, a $T$-round LOCAL algorithm would imply an $O(\log T)$ round MPC algorithm, by what has been known as \textit{graph exponentiation}~\cite{lenzen08}: we have $I=O(\log T)$ iterations. In iteration $i$, we make each node learn all nodes within its distance $2^{i}$ (and their edges) by simply having each node send its $(2^{i-1})$-hop neighborhood, which it knows from the previous iteration, to all of the nodes in its $(2^{i-1})$-hop neighborhood. In the end, once (the machine holding) each node knows its $T$-hop neighborhood, it can simply simulate the LOCAL algorithm on its own. We comment that subject to some technicalities\footnote{The statement is only for randomized algorithms, it is conditioned on a (widely believed) $1$-vs-$2$ cycle conjecture, and it applies only to the component-stable MPC algorithms. See \cite{ghaffari2019conditional} for details.}, this exponential speed-up is the best that we can hope for~\cite{ghaffari2019conditional}. However, the catch is that this exponential speed-up is doable only if the local neighborhoods are small enough to fit within the memory of one machine. That is not always the case, e.g., for the $T=\Theta(\log n)$-round LOCAL orientation algorithm, the neighborhood can include all graph nodes.

Ghaffari and Uitto~\cite{Ghaffari2018SparsifyingDA} introduced a \textit{sparsification} technique for \textit{distributed algorithms}, with the aim of combatting this issue (in the context of maximal independent set and matching algorithms). In simple terms, their sparsification reduces the size of the neighborhood relevant for simulating (a part of) the LOCAL algorithm. Ghaffari et. al.\cite{ghaffari2019improved} adapted this technique to low-outdegree orientation, showing that one can ``\textit{simulate}" each phase consisting of $T'=\Theta(\sqrt{\log n})$ rounds of the $T=O(\log n)$-round LOCAL model algorithm, by a $\Theta(T')$ round LOCAL algorithm running on a randomly chosen subgraph that has maximum degree $2^{\Theta(T')}=2^{\Theta(\sqrt{\log n}})$. Hence, choosing constants right, the relevant $\Theta(T')$-hop neighborhoods have size $2^{\Theta((T')^2)} \leq n^{\delta}$ and thus fit into local memory, opening the road for graph exponentian which takes only $O(\log(T'))$ MPC rounds. Performing this repeatedly to run all the $T/T'$ phases then makes the MPC round complexity $T/T'\cdot \log(T') = \tilde{O}(\sqrt{\log n})$. This $\tilde{O}(\sqrt{\log n})$ bound is the state of the art complexity for orientation in general graphs (even for much higher outdegree bounds, e.g., $\lambda \poly(\log n)$). Breaking this round complexity is the primary objective of the present paper. 

We also comment that this complexity is also the best known for scalable MPC algorithms of maximal independent set and matching in general graphs~\cite{Ghaffari2018SparsifyingDA}, and improving that---e.g., to $\poly(\log\log n)$---is one of the central open problems in the MPC literature.

\subsection{Our Results}
We present a scalable MPC algorithm that computes an orientation with outdegree $O(\lambda \log\log n)$ in $\poly(\log\log n)$ rounds, as we formally state below.
\begin{restatable}{theorem}{MainThm}\label{thm:main}
    There is a randomized scalable MPC algorithm that given any undirected graph $G=(V, E)$ with $n=|V|$ and $m=|E|$, runs in $\poly(\log \log n)$ rounds and computes an orientation of the edges such that each node has outdegree at most $O(\lambda \log\log n)$, with high probability. Here, $\lambda$ denotes the arboricity of the graph. The algorithm uses $n^{\delta}$ memory per machine, where $\delta \in (0,1)$ is an arbitrary positive constant, and $\tilde{O}(m+n)$ words of global memory. If $\lambda$ is upper bounded by $(\log n)^{O(\log\log n)}$, the algorithm is deterministic.  
\end{restatable}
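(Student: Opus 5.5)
We plan to reduce \Cref{thm:main} to a peeling process with only $O(\log\log n)$ \emph{phases}. Each phase shrinks the set of unoriented (``alive'') vertices by a factor of $2^{\Theta(\log n/\log\log n)}$, rather than by a constant factor as in the Barenboim--Elkin algorithm. Each phase must also be simulable in $\poly(\log\log n)$ MPC rounds via sparsification and graph exponentiation.

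\textbf{Assumptions and layering.} We may assume $\lambda$ is known up to a factor of $2$: run all $O(\log n)$ guesses $\lambda' = 2^j$ in parallel and keep the smallest successful one, which costs an $O(\log n)$ factor in global memory. Set the degree threshold to $d = C\lambda\log\log n$. Recall the counting fact that in any subgraph $H$, fewer than $|V(H)|\cdot 2\lambda/d'$ vertices have degree above $d'$.

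A phase proceeds as follows. In the current alive graph $H$, the vertices of degree at most $C\lambda$ are removed and their edges oriented outward. Iterating this for $T=\Theta(\log n/\log\log n)$ steps shrinks $|V(H)|$ by a factor of $2^{\Theta(T)}$. Hence $O(\log\log n)$ phases suffice, provided each vertex's total outdegree is charged only in the phase where it is removed. The extra $\log\log n$ factor in the outdegree is the slack we will spend on approximation errors inside phases: the algorithm is allowed to remove a vertex of degree up to $O(\lambda\log\log n)$ when it only \emph{estimates} the degree.

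\textbf{Simulating one phase.} The key step, and the one we expect to be the main obstacle, is simulating $T$ steps of peeling in $\poly(\log\log n)$ rounds. A $T$-hop neighborhood can have size $(C\lambda)^T$, and high-degree vertices make it unbounded. We will handle this in two ways.

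First, vertices of degree above $\lambda\log^{c} n$ cannot be peeled within the phase and are frozen. Their count is a $1/\poly\log n$ fraction, so they do not affect the shrink rate.

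Second, we replace exact neighbor counting by sampled counting. Each step $t$ uses an independent sample of edges at rate $p_t$, and a vertex checks whether its sampled alive degree is at most about $p_t\cdot C\lambda\log\log n$. Chernoff bounds with $\log\log n$ slack make errors rare, and erroneous vertices are simply deferred. We would try to organize the $T$ steps into $O(\log\log n)$ sub-blocks of length $T'=\Theta(\log n/(\log\log n)^2)$. Within a sub-block, each vertex's relevant sampled neighborhood has degree $\poly\log n$, so the $T'$-hop neighborhood has size $(\log n)^{O(T')} = n^{\delta}$ when the constants are chosen appropriately. Graph exponentiation then simulates each sub-block in $O(\log T') = O(\log\log n)$ rounds. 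Altogether this gives $O(\log\log n)^3 = \poly(\log\log n)$ rounds.

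\textbf{Correctness of the simulation.} We must show that simulating on the sparse sample is faithful. Every vertex that actually gets removed has true alive degree $O(\lambda\log\log n)$ at its removal time, which bounds the outdegree. Moreover, the set of remaining vertices still shrinks geometrically, using an amortized argument that vertices wrongly kept are few because of the degree-counting fact. This faithfulness argument is where we expect the delicate work.

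\textbf{Determinism and memory.} For $\lambda \le (\log n)^{O(\log\log n)}$, the sampling can be replaced by deterministic sparsification, since the degrees of interest are already $\poly$-small enough for neighborhood collection. Alternatively, we can derandomize via limited-independence hash families, with the seeds found by the method of conditional expectations in MPC. Global memory stays $\tilde O(m+n)$, because each vertex stores an $n^{\delta}$-size ball only for alive low-degree vertices. We would charge this memory against edges, using the fact that the alive subgraph has at most $\lambda|V(H)|$ edges.
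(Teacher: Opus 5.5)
There is a genuine gap. The step that fails is freezing the high-degree vertices.

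\textbf{Why the freezing step fails.} You freeze every vertex of degree above $\lambda\log^{c}n$ for the whole phase and argue that, since they are a $1/\poly\log n$ fraction, they ``do not affect the shrink rate''. But frozen vertices stay alive for the entire phase, and they block every vertex whose peeling depends on them. So a phase only guarantees a $\poly\log n$ shrink factor, not $2^{\Theta(\log n/\log\log n)}$.

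Concretely, take levels $W_1,\dots,W_h$ with $|W_{i+1}|=|W_i|/\log^{c}n$, where every vertex of $W_i$ has $\lambda$ neighbors in $W_{i+1}$. Then each vertex of $W_{i+1}$ has $\lambda\log^{c}n$ neighbors in $W_i$. Orienting $W_i\to W_{i+1}$ shows the arboricity is at most $\lambda+1$. Yet at the start of phase $t$ only $W_t$ can be peeled. The levels $W_{t+1},\dots,W_{h-1}$ still have degree $\lambda\log^{c}n+\lambda$ and are frozen, and $W_h$ is stuck behind them. Each phase (or sub-block, if you re-freeze) therefore removes one level. You need $h=\Theta(\log n/\log\log n)$ phases, which is no better than the trivial $O(\log n)$-round simulation.

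Raising the freezing threshold to some $D$ only moves the problem:
\begin{itemize}
\item depth-$T'$ balls can contain $D^{\Omega(T')}$ vertices, so local memory forces $T'\log D=O(\delta\log n)$;
\item roughly $\log n/\min(T',\log(D/\lambda))$ phases are needed: one level per phase when fan-in exceeds $D$, and at most $T'$ peeling steps per sub-block otherwise.
\end{itemize}
Balancing these gives about $\sqrt{\log n}$ phases. That is exactly the Ghaffari--Lattanzi--Mitrovic trade-off the theorem is meant to break.

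Sampling does not rescue this. With $\lambda=O(\log n)$ and threshold $\Theta(\lambda\log\log n)$, a high-probability Chernoff bound needs $p_t=\Omega(1/\log\log n)$, so degrees barely drop. Also, only false negatives can be ``deferred''. A heavy vertex whose sample looks light gets removed with large out-degree.

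\textbf{The missing idea.} Bound a vertex's view by the number of \emph{incoming strictly increasing paths} with respect to the unknown true peeling layering $\ell_G$, not by a degree cap. The paper keeps tree-shaped views, in which a vertex appears once per path reaching it. In each of $s=O(\log\log n)$ exponentiation steps it runs \textsc{LocalPrune}, which cuts the $k$ heaviest pruned subtrees at every tree node.

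Since at most $d\le k$ children of a node map to the same or a higher layer, the pruned tree of $v$ has at most $\mathrm{NumPathsIn}_{G,\ell_G}(v)$ nodes (\cref{lem:local-prune-size}). By double counting (\cref{lem:double_counting}), this exceeds $\sqrt{B}$ for only a $k^{L}/\sqrt{B}$ fraction of vertices, even at simulated depth $L=\Theta(\log_k B)$. Depth is thus decoupled from any degree cap, and this is what escapes the $D^{T'}$ trade-off.

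The $\log\log n$ factor in the out-degree pays for the at most $k$ neighbors lost per pruning step: $|\mathrm{Missing}(x)|\le sk$ for strictly monotonically reachable $x$. That is why each tree is peeled with threshold $(s+1)k$. Taking the minimum layer over all trees still gives a valid partial layering, with $\ell(v)\le\ell_G(v)$ for every good $v$.

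\textbf{Memory and determinism.} Your global-memory accounting also fails as stated. Balls of super-polylogarithmic size (up to $n^{\delta}$) for $\Theta(n)$ alive vertices cannot be charged to $|E(H)|\le\lambda|V(H)|$. The paper first runs $O(\log k)$ plain peeling rounds to reach $|U|\le n/k^{100}$. It then boosts the budget as $B_{i+1}=\min(B_i^{100},n^{\delta})$ while the unassigned set shrinks by $B_i^{100}$, keeping $|U_i|B_i=O(n)$ over $O(\log\log n)$ phases. Determinism for $\lambda\le(\log n)^{O(\log\log n)}$ then comes for free, because no sampling is used. Randomness enters only through the edge partitioning of \cref{lem:edge_partition}.
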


The deterministic portion of the above algorithm, which is applicable when $\lambda \leq (\log n)^{O(\log\log n)}$, actually provides more structure, which will be used for our coloring results: it partitions the set of vertices $V$ into $L=\Theta(\log n)$ layers $H_1, H_2, \dots, H_{\Theta(\log n)}$, such that each node $v\in H_i$ has at most $O(\lambda \log\log n)$ neighbors in $H_{i} \cup \dots \cup H_{L}$. The orientation is implied simply by orienting edges toward the higher layer (breaking ties arbitrarily, e.g., according to identifiers). Furthermore, the layer sizes decay exponentially and we have $|H_i| \leq n \cdot exp(-\Theta(i))$. This is what is often called $H$-partition \cite{barenboim2008sublogarithmic,grunau2023conditionally}. Using these properties, as a concrete application, we obtain a similarly fast scalable MPC algorithm for coloring graphs using $O(\lambda \log\log n)$ colors for general $\lambda$:

\begin{restatable}{theorem}{ColoringThm}\label{thm:coloring}
    There is a randomized scalable MPC algorithm that given any undirected graph $G=(V, E)$ with $n=|V|$ and $m=|E|$, runs in $\poly(\log \log n)$ rounds and computes a coloring of the vertices with $O(\lambda \log\log n)$ colors, with high probability. Here, $\lambda$ denotes the arboricity of the graph. The algorithm uses $n^{\delta}$ memory per machine, where $\delta>0$ is an arbitrary positive constant, and $\tilde{O}(m+n)$ words of global memory. 
\end{restatable}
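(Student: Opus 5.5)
We derive \Cref{thm:coloring} from the H-partition produced by the deterministic part of \Cref{thm:main}, and handle large arboricity separately.

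\textbf{Case split on $\lambda$.} If $\lambda > (\log n)^{C\log\log n}$ for a suitable constant $C$, we avoid orientation-based coloring. Each node picks a uniformly random group among $k=\lambda/\poly(\log n)$ groups. By a Chernoff bound and a union bound over sets, each induced subgraph $G_j$ has arboricity $O(\lambda/k)=\poly(\log n)\le(\log n)^{O(\log\log n)}$ w.h.p. To make the union bound valid, we argue via the degeneracy ordering: each node has at most $2\lambda$ earlier neighbors, and in expectation $2\lambda/k\ge\poly(\log n)$ of them land in its own group, so concentration holds. We then color each $G_j$ in parallel with a fresh palette of $O((\lambda/k)\log\log n)$ colors, for $O(\lambda\log\log n)$ colors in total. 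This reduces everything to the regime $\lambda\le(\log n)^{O(\log\log n)}$.

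\textbf{Small $\lambda$.} We compute the H-partition $H_1,\dots,H_L$ with $L=\Theta(\log n)$, out-degree bound $d=O(\lambda\log\log n)$ into higher-or-equal layers, and $|H_i|\le n e^{-\Theta(i)}$. We then color layers from the top layer $H_L$ down to $H_1$. Each node $v\in H_i$ treats as constraints only its at most $d$ neighbors in $H_i\cup\dots\cup H_L$. Lower-layer neighbors will respect $v$'s color when they are colored later. So each layer is a list-coloring instance with palette $2d+1$, where each node has at most $d$ already-colored higher neighbors plus at most $d$ same-layer neighbors. This is a $(\deg+1)$-list-coloring problem on a graph of maximum degree $d$.

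Done naively, this costs $L=\Theta(\log n)$ sequential phases. To get $\poly(\log\log n)$ rounds, we group layers into $O(\log\log n)$ blocks. Because of the exponential decay of layer sizes, all layers $H_i$ with $i\ge c\log\log n\cdot\log$-scale thresholds have small total size. Concretely, the union $U$ of layers above index $i_0=\log n-\Theta(\log\log n)$... A cleaner plan is the following. Use the high layers' small size ($\le n^{\delta/2}$ nodes after index $(1-\delta/2)\ln n$-ish) to gather them onto machines, and then handle the lower part by graph exponentiation. The key observation is that the dependency chain from a node $v$ to higher layers consists of directed paths of out-degree $d$. Each node thus needs its out-reachable set, which has size $d^{\ell}$ for path length $\ell$.

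We therefore process in $O(\log\log n)$ stages. Each stage handles $\Theta(\delta\log n/\log d)$ consecutive layers, with $\log d=O(\log\log n\cdot\log\log n)$. In each stage, we first color the lowest layer randomly with a one-shot trial, and then collect for every node its out-neighborhood of depth $\Theta(\delta\log n/\log d)$, which fits into $n^\delta$ memory. Locally, each node simulates the sequential top-down greedy coloring restricted to that stage. To make the simulation consistent, colors must be a deterministic function of that out-neighborhood. We use the greedy rule ``smallest color not used by higher-or-equal neighbors, with same-layer ties broken by ID via a precomputed deterministic $O(d^2)$-coloring of each layer (Linial, $O(\log^* n)$ rounds)''. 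Processing within a layer is then sequential over $O(d^2)$ color classes. That cost is absorbed into the local simulation, since the simulation is memory-bounded but not round-bounded.

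\textbf{Main obstacle.} The hard step is bounding the number of stages. Pure depth-$\ell$ out-neighborhoods give only $\ell=\Theta(\log n/\log d)$ layers per stage, which yields $\Theta(\log d)=\poly(\log\log n)$ stages. That is actually fine, as long as $\log d=\poly(\log\log n)$, which holds exactly because $\lambda\le(\log n)^{O(\log\log n)}$. This is why the large-$\lambda$ reduction is needed. I would double-check two things. First, that global memory stays $\tilde O(m+n)$: the exponential layer decay bounds the number of nodes gathering large balls, and I would balance this against $d^\ell\le n^{\delta}$. Second, that previously colored higher stages are simply fixed constraints forwarded one hop.
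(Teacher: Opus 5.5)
Your overall architecture matches the paper's: reduce the arboricity by random vertex partitioning, compute the H-partition, color layers top-down as $(\deg+1)$-list-coloring instances, and speed this up with directed exponentiation along same-or-higher-layer edges. There is, however, a genuine gap in the within-layer step.

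\textbf{The within-layer rule has large locality.} Your rule is greedy over the $O(d^2)$ Linial classes of $G[H_i]$. Under it, the color of $v\in H_i$ depends on same-layer nodes along chains of strictly decreasing Linial class, and such chains can have length $\Theta(d^2)$. When you simulate via gathered neighborhoods, what matters is this dependency radius, not the amount of local computation. So the remark that ``the simulation is memory-bounded but not round-bounded'' is backwards: every hop of sequential dependence must be covered by the ball you collect.

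Intra-layer edges are bidirectional and the out-degree is $d$. Hence the ball needed to determine even one layer has size up to $d^{\Theta(d^2)}$. This exceeds $n^{\delta}$ as soon as $d\ge\sqrt{\log n}$, e.g.\ already when $\lambda=\Theta(\log n)$. So the depth-$\Theta(\delta\log n/\log d)$ out-ball you collect does not determine the colors, and no choice of block length fixes this. Known deterministic alternatives still have locality $\poly(d)$ or $\poly(\log n)$ per layer, which again gives far more than $\poly(\log\log n)$ stages.

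The missing ingredient, which the paper uses, is a LOCAL $(\deg+1)$-list-coloring algorithm whose locality is only $\poly(\log\log n)$. Concretely, these are the randomized $\tilde O(\log^{5/3}\log n)$-round algorithms of \cite{halldorsson2022near,ghaffari2024near}. Randomness does not obstruct consistent simulation. You fix each node's random bits in advance and ship them with its state, so colors remain a deterministic function of the gathered ball; your insistence on a deterministic rule is unnecessary. A block of $j-j'$ layers then needs directed balls of radius $O((j-j')\log^{1.67}\log n)$. These have size $2^{O((j-j')\log^{1.67}\log n\cdot\log d)}$, where $\log d=\poly(\log\log n)$ after the arboricity reduction. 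The paper reduces all the way to $\lambda=O(\log n)$ via \Cref{lem:vertex_partition}, so $\log d=O(\log\log n)$.

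\textbf{Global memory with equal-size blocks.} The issue you flagged is real, and equal-size blocks do not resolve it. A block covering layers $j',\dots,j-1$ involves up to $2^{-j'+1}n$ nodes, each gathering a ball of size up to $n^{\Theta(\delta)}$. For blocks with $j'$ below $\Theta(\delta\log n)$, this costs $n^{1+\Theta(\delta)}$ words.

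The paper instead takes $j-j'=\Theta(\delta j/\log^{2.67}\log n)$, so that $2^{-j'}\cdot 2^{O((j-j')\log^{2.67}\log n)}=O(1)$. The number of uncolored layers then shrinks by a factor $1-1/\Theta(\log^{2.67}\log n)$ per iteration. Each iteration costs $O(\log\log n)$ rounds of directed exponentiation (\Cref{lem:directedExp}), giving $O(\log^{3.67}\log n)$ iterations in total. The last $\Theta(\log^{2.67}\log n)$ layers are colored by running the LOCAL algorithm directly. So the number of stages is $\poly(\log\log n)$, but it is not $O(\log d)$ as in your plan.

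Finally, the ``one-shot random trial'' on the lowest layer of each stage plays no role in your argument, and it would not color all nodes anyway.
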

\shortOnly{
We defer the proof of \cref{thm:coloring} to the full version of the paper (a sketch of the proof appears in the technical overview below).
}

\paragraph{Discussion.} The two results above are notable primarily in breaking the $\tilde{\Theta}(\sqrt{\log n})$ complexity barrier discussed before. However, we note that our results come with the disadvantage of increasing the outdegree bound slightly, from $O(\lambda)$ to $O(\lambda \log\log n)$. For certain settings, this can be acceptable. For instance, when the coloring resulting from the orientation is used for scheduling, this $\log\log n$ factor is only a small in time complexity. Nonetheless, obtaining the $O(\lambda)$ outdegree bound in $\poly(\log\log n)$ scalable MPC rounds remains an interesting open problem. 

\subsection{Technical Overview}
\paragraph{Orientation} For a randomized orientation algorithm, it suffices to devise an algorithm for graphs with arboricity $\lambda \leq O(\log n)$, if we are willing to lose a $(1+\eps)$ factor in the outdegree for any fixed constant $\eps>0$ which is certainly the case for us. To get such a reduction from higher arboricity graphs to these lower arboricity graphs, we randomly partition the edges of the graph into $\lambda/\Theta(\log n)$ parts. See \Cref{lem:edge_partition} for details. So, from now on in this overview, let us focus on graphs with $\lambda = O(\log n)$. 

The base idea for our orientation is to ``\textit{approximately simulate}" the natural $\Theta(\log n)$-round LOCAL process\cite{barenboim2008sublogarithmic}, but in merely $\poly(\log\log n)$ MPC rounds. Recall that the LOCAL algorithm simply consists of $L=\Theta(\log n)$ iterations, where per iteration $i$ we remove all nodes of degree $d=O(\lambda)$ in the remaining graph and we place them in a layer $H_{i}$. This gives a partition $V=H_1\sqcup H_2 \ldots \sqcup H_{L}$, which will be our reference in the rest of this overview. An orientation with outdegree at most $d$ is implied by directing toward higher layers (ties broken with IDs). 

As discussed before, the naive hope would be to speed up this $\Theta(\log n)$-round LOCAL algorithm to $\Theta(\log\log n)$ rounds in MPC, by applying the graph exponentiation described above for $\Theta(\log\log n)$ iterations and thus learning $\Theta(\log n)$-hop neighborhoods. However, this is not always possible, since the neighborhood sizes can be large and would not fit in local memory. 

On a high level, and ignoring some important aspects, our algorithm performs a variant of graph exponentiation using certain pruned views of the local neighborhoods. This allows us to control the growth of the neighborhood views for nearly all nodes of interest. However, for this to work out, per exponentiation iteration, we \textit{prune} some $O(\lambda)$ edges from each node and allow their orientation to be arbitrary. This results in our $O(\lambda \log\log n)$ outdegree overall. Let us look closer into this. 

We attempt to \fullOnly{partially} ``simulate" the first $L'=O(\delta\log n/\log\log n)$ rounds of the LOCAL algorithm. In the partition $V=H_1\sqcup H_2 \ldots \sqcup H_{L}$ fixed above (which is of course not known to our algorithms, but we use for algorithm intuition and analysis), let us drop all edges inside each layer, i.e., edges $(u,v)$ for $u,v\in H_i$ for some $i\in [1, L]$. This is only $O(\lambda)$ edges per node. Then, each remaining directed path goes through layers monotonically. For each node $v$, the total number of directed paths of length at most $L'$ starting from $v$ is at most $d^{L'} \leq n^{\delta/3}.$ Hence, with the exception of at most $n/n^{\delta/3}$ nodes, each remaining node $u$ has at most $n^{2\delta/3}$ directed paths of length at most $L'$ ending in $u$. For such a node $u$, it seems that its incoming neighborhood up to distance $L'$ would fit within the local memory of one machine, and thus one would hope that we should be able to ``perform" the part of the LOCAL model relevant to this node $u$. Of course, the catch is that this is talking about neighborhood reachable only along \textit{incoming edges} in a fixed orientation, which is unknown to the algorithm. Without this distinction, when paths are allowed to intermix outgoing and incoming edges, the number of paths reachable from one node can be very large. 

Ideally, we would have liked to ignore learning neighborhoods and exponentiation along $O(\lambda)$ \textit{outgoing edges} and let those edges be oriented arbitrarily. But we do not know which edges are outgoing. We try to do something with a similar effect: intuitively, in the hypothetical scenario that neighborhood size of $v$ could be viewed as the summation of the neighborhoods along edges to different neighbors (e.g., if the local neighborhood in consideration was simply a tree~\footnote{In a sense, one can say this is a part of the algorithm of Grunau et al.\cite{grunau2023conditionally} for orientation and coloring in forests, where they can distinguish and ``ignore" the edge/direction toward the heaviest subtree. But that heavily relies on the tree local view in their case of forests, which is far from truth in our setup of general graphs. Our algorithm is partly inspired by trying to make such an idea work in general graphs.}), we would drop up to $O(\lambda)$ edges that connect toward the largest neighborhoods. However, the major challenge is that we are dealing with general graphs: due to the existence (and potential abundance) of cycles, there is no such summation property. It is in fact plausible that because of one large region, connections to many neighbors appear to have large neighborhoods. 

Our algorithm in a sense forces the tree-like naive intuition to work out. Though, the algorithm and analysis are quite different and much more complex. We make each node maintain its neighborhood view during the exponentiation as a \textit{rooted tree}, by allowing each node to appear multiple times in the neighborhood along different branches of the tree, essentially once for every distinct path that reaches it. This gives a tree-like view of the paths connecting to the node. There is now much care needed in how these tree views (and guarantees about them) relate to the actual neighborhoods in the graph. We do attempt to summarize those aspects here, due to their technical nature. We show that during each exponentiation, we can perform a pruning on these tree-like local views, by iteratively moving from the leaves to the root and each time cutting away the $O(\lambda)$ heaviest subtrees for each node. This ensures that the non-pruned views fit into local memory in MPC and allow us to perform the exponentiation. Given the $O(\log\log n)$ iterations of exponentiation, this translates to an $O(\lambda\log\log n)$ outdegree. However, we need additional properties and arguments to show that, if we simulate the LOCAL algorithm based on these pruned tree-like views, we can translate the guarantees to the original graph.

\paragraph{Coloring.} For coloring, suppose that we have already computed the $L=\Theta(\log n)$-layer partition $V=H_1\sqcup H_2 \ldots \sqcup H_{L}$ with outdegree at most $d=O(\lambda \log\log n)$, as discussed above. Also, let us again focus on graphs with $\lambda=O(\log n)$, to which we can reduce the general case (this time by random \textit{vertex} partitioning, cf. \Cref{lem:vertex_partition}). 

From the viewpoint of the LOCAL model, we could color this graph in $\Theta(\log n)$ phases,where in each phase we do a $degree+1$ list coloring on the graph induced by $H_i$. Each phase would take $\tilde{O}(\log^{5/3}\log n)$ LOCAL rounds, thanks to the state of the art~\cite{halldorsson2022near,ghaffari2024near}. So the overall algorithm would take $O(\log n) \cdot \tilde{O}(\log^{5/3}\log n)$ rounds. 

We simulate the above much faster in MPC: Suppose edges inside each layer are duplicated as bidirectional edges, and edges across layers are directed toward the higher layer endpoint. Then, one can see that in the above algorithm, the color of each node $v\in H_i$ is impacted only by nodes reachable from $v$ along directed paths of distance at most $(L - i+1) \cdot \tilde{\Theta}(\log^{5/3}\log n)$. This allows us to do exponentiation along outgoing edges, which will be small enough to fit local memory. For instance, in one shot, we can make nodes in all the last $\Theta(\log n/\tilde{\Theta}(\log^{8/3}\log n))$ layers learn all other nodes reachable from them, in only $O(\log\log n)$ MPC rounds. For this, we perform a \textit{directed graph exponentiation} along outgoing edges. Then that can be used to locally simulate the coloring of these $\tilde{\Theta}(\log n/(\log^{8/3}\log n))$ layers. Repeating this for $\poly(\log\log n)$ iterations gives the complete coloring.

\subsection{Other Related Work} 
\paragraph{$\Delta$-dependent coloring.} Coloring algorithms have been the subject of intense study in the LOCAL model of distributed computing and we refer to the 2013 text of Barenboim and Elkin~\cite{barenboimelkin_book} for a (slightly outdated) general review. The most basic parameterization in distributed coloring is based on the maximum degree $\Delta$ in the graph. See, e.g., \cite{chang2018optimal} for the state of the art on $\Delta+1$ coloring, which runs in $\poly(\log\log n)$ rounds of the LOCAL model, and the generalizations and sharpening in \cite{halldorsson2022near,ghaffari2024near}. 

Chang et al.\cite{chang2019coloring} give a scalable MPC algorithm for $\Delta+1$ coloring, partly based on this LOCAL algorithm, which achieves a round complexity of $O(\log\log \log n)$---this needs plugging in the improved network decomposition bounds\cite{rozhonghaffari20} in their algorithm. This is the best known round complexity for scalable MPC $\Delta+1$ coloring algorithms. We also comment that in the quasilinear (and higher) memory regime of MPC, where the local memory is $\tilde{\Omega}(n)$, an algorithm of Assadi, Chen, and Khanna~\cite{assadi2019sublinear} computes $\Delta+1$ coloring in $O(1)$ rounds. However, this is vastly different, from a technical perspective, from the scalable MPC regime. 

\paragraph{Density-dependent coloring.} The $\Delta$-dependent coloring mentioned above can be too relaxed for many graphs of interest. The arboricity (or density) measure studied in this paper provides a sharper bound for coloring in a wide range of graphs. As a simple example, in a star graph, $\Delta=\Theta(n)$ and $\lambda=1$, and we would of course want the number of colors to be closer to the latter. 

If we are in the memory regime of MPC that each machine has memory $\tilde{\Omega}(n)$, one can compute a $(2+\eps)\lambda$ orientation and $(1+\eps)\lambda$ in $O(1)$ rounds via simple random partitioning and gathering the entire relevant subgraph in one machine. The problem in the scalable MPC is vastly different, however. In the LOCAL model, the best known algorithms for coloring as a function of arboricity are by Ghaffari and Lymouri\cite{ghaffariLymouri2017} and compute a coloring with $O(\lambda)$ colors in $\tilde{O}(\log n)$ rounds. See their paper for sharper statements of the bounds. See also \cite{barenboim2008sublogarithmic} for deterministic algorithms.

\paragraph{Other problems and models.} There is a plethora of other tangentially relevant work, for which providing a comprehensive review is beyond our scope. We still mention a few of the work in scalable MPC, on other graph problems, and some work on density-dependent coloring in other models. Maximal independent set and matching have been studied in low arboricity graphs\cite{behnezhad2019massively,ghaffari2020improved, fischer2023deterministic}. For general graphs (i.e., non-constant $\lambda$), ignoring $O(\log\log n)$ factors, the complexity is stuck at $\tilde{O}(\sqrt{\log \lambda})$, which is the same bottleneck of \cite{Ghaffari2018SparsifyingDA} mentioned before; it arises essentially by reducing the maximum degree $\Delta$ to $\poly(\lambda)$ and then resorting to the state-of-the-art $\Delta$-dependency of $\tilde{O}(\sqrt{\log \lambda})$. Moreover, there has been much work on scalable MPC algorithms for graph connectivity in $\tilde{O}(\log(diameter))$ rounds\cite{andoni2018parallelgraphconnectivitylog,behnezhad2019near} and its derandomizations\cite{coy2022deterministic,fischer2022improved,balliu2023optimal}. The discussions above do not emphasize the distinction between randomized and deterministic MPC algorithms. See \cite{czumaj2021graph,czumaj2021improved,coy2022deterministic,fischer2023deterministic}, and the citations therein, for examples of recent research on derandomization in MPC. See also \cite{latypov2024adaptive} and \cite{ghaffari2024dynamic,christiansen2023improved} for density-dependent coloring algorithms in other models, respectively, adaptive MPC and dynamic algorithms.

\section{Preliminaries}

The following two Lemmas are folklore. They allow to reduce the effective arboricity to $O(\log n)$, using simple random partitioning of the edges/vertices. 
\shortOnly{The proofs are deferred to the full version of the paper.}

\begin{lemma}[Edge Partitioning]
\label{lem:edge_partition}
Let $G$ be a graph with at most $n$ vertices and let $k \in \mathbb{N}$ satisfying $k \geq \lambda(G)$. Let $L := \lceil k/\log(n) \rceil$ and $G_1, G_2, \ldots, G_L$ be the $L$ graphs that one obtains by partitioning the edges into $L$ parts uniformly at random. Then, with high probability $\max_{i \in [L]} \lambda(G_i) = O(\log n)$.
\end{lemma}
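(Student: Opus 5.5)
We use Nash-Williams: it suffices to show that, with high probability, every vertex subset $S$ with $|S|\ge 2$ satisfies $|E_i(S)| \le C\log n\,(|S|-1)$ for every part $i$ and a suitable constant $C$. The standard way is to avoid a union bound over all $2^n$ subsets by passing to an orientation. Since $\lambda(G)\le k$, Nash-Williams gives a partition of $E$ into $k$ forests, so there is an orientation of $G$ with every outdegree at most $k$. Fix such an orientation (it is only for the analysis).

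\textbf{Step 1: outdegrees in one part.} Fix a vertex $v$ and a part $i$. Each of the at most $k$ out-edges of $v$ lands in $G_i$ independently with probability $1/L \le \log n / k$. So the outdegree of $v$ in $G_i$ is a sum of independent indicators with mean at most $\log n$. A Chernoff bound gives $\Pr[\text{outdeg}_{G_i}(v) > c\log n] \le n^{-c'}$ for a suitable constant $c$, with $c'$ as large as desired. We then take a union bound over the $n$ vertices and the $L$ parts. Here $L \le k/\log n + 1$, and we may assume $k \le n$ by replacing $k$ with $\min(k,n)$; this is harmless because $\lambda(G) < n$, and it only shrinks $L$ if $k>n$. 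The union bound is therefore over $\mathrm{poly}(n)$ events. Hence with high probability every $G_i$, with the induced orientation, has maximum outdegree at most $c\log n$.

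\textbf{Step 2: from outdegree to arboricity.} A graph with an orientation of maximum outdegree $D$ has $|E(S)| \le D|S|$ for every $S$. Since $|S|\ge 2$, this gives
\[
\left\lceil \frac{|E(S)|}{|S|-1} \right\rceil \le \lceil 2D \rceil .
\]
So $\lambda(G_i) \le 2c\log n + 1 = O(\log n)$.

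The only delicate point is Step 1 when $k$ is much larger than $n$, where $L$ could be superpolynomial; capping $k$ at $n$ handles this. Another subtlety is that the mean may be far below $\log n$, for instance when $k<\log n$ and so $L=1$. This is fine because the Chernoff bound is applied with an absolute threshold $c\log n$ above an upper bound on the mean. Everything else is routine.
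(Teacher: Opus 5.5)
Your argument is the paper's argument: fix an orientation with out-degree at most $\lambda(G)\le k$, apply a Chernoff bound to the number of out-edges of each vertex landing in each part (mean at most $\log n$), take a union bound, and read off arboricity $O(\log n)$. Your Step~2 just makes explicit the out-degree-to-arboricity conversion that the paper leaves implicit. As written, the proof is complete whenever $k$ is polynomial in $n$. That covers the paper's only use of the lemma, where $k=\Theta(\lambda)$.

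The step that does not hold up is your reduction for large $k$. The lemma fixes $L=\lceil k/\log n\rceil$ from the given $k$, so replacing $k$ by $\min(k,n)$ changes the random experiment. You would be proving the claim for a coarser random partition, and nothing you wrote transfers it to the finer partition with $\lceil k/\log n\rceil$ parts. That more parts can only help is plausible, but it needs a coupling or monotonicity argument. Without such a transfer, when $k/\log n$ is superpolynomial, a union bound over $nL$ events using only the $L$-independent bound $n^{-c'}$ gives nothing.

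The simplest repair is to avoid the union bound over parts altogether. Fix $v$ with out-degree $d_v\le\lambda(G)\le\min(k,n)$ and set $t=\lceil c\log n\rceil$. If some part receives $t$ out-edges of $v$, then some $t$-subset of them lands in a single part, so
\[
\Pr\bigl[\exists i:\ \mathrm{outdeg}_{G_i}(v)\ge t\bigr]\;\le\;\binom{d_v}{t}L^{-(t-1)}\;\le\;\Bigl(\frac{e\,d_v}{t}\Bigr)^{t}\Bigl(\frac{\log n}{k}\Bigr)^{t-1}\;\le\;\Bigl(\frac{e\log n}{t}\Bigr)^{t}\frac{d_v}{\log n}\;\le\;n\Bigl(\frac{e}{c}\Bigr)^{c\log n}.
\]
For $c\ge 2e$ (logarithms base $2$) this is at most $n^{1-c}$, regardless of $L$. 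A union bound over the $n$ vertices then finishes. The paper's own proof also just says ``a union bound'' without discussing the size of $L$.
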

\fullOnly{
\begin{proof}
Consider an orientation of the edges in $G$ with out-degree $O(\lambda(G))$. Now, consider some fixed vertex $v$. In expectation, the number of out-edges of $v$ in each of the $L$ partitions is $O(\lambda(G))/L = O(\log n)$, and a Chernoff Bound implies that this bound holds with high probability. Hence, a union boun implies that each node has $O(\log n)$ out-edges in each part with high probability, and therefore $\max_{i \in [L]} \lambda(G_i) = O(\log n)$ with high probability.
\end{proof}
}

\begin{lemma}[Vertex Partitioning]
\label{lem:vertex_partition}
Let $G$ be a graph with at most $n$ vertices and let $k \in \mathbb{N}$ satisfying $k \geq \lambda(G)$. Let $L := \lceil k/\log(n) \rceil$ and $G_1, G_2, \ldots, G_L$ be the $L$ vertex-induced subgraphs that one obtains by partitioning the vertices into $L$ parts uniformly at random. Then, with high probability $\max_{i \in [L]} \lambda(G_i) = O(\log n)$.
\end{lemma}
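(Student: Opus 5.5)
We plan to follow the proof of \Cref{lem:edge_partition}, but to use a \emph{degeneracy ordering} in place of an arbitrary orientation, because a vertex partition keeps or drops edges only through their endpoints. First, if $k < \log n$ then $L = 1$ and $G_1 = G$ has $\lambda(G_1) = \lambda(G) \le k = O(\log n)$, so there is nothing to prove. Assume from now on that $L \ge 2$, i.e.\ $L = \Theta(k/\log n)$.

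Since every subgraph $S$ of $G$ satisfies $|E(S)| \le \lambda(G)(|S|-1)$, every nonempty subgraph has a vertex of degree less than $2\lambda(G)$. Repeatedly peeling such a vertex gives an ordering $v_1, \ldots, v_n$ of $V$ in which each $v$ has at most $2\lambda(G) \le 2k$ neighbors that come later. Orient each edge toward its later endpoint, and write $N^+(v)$ for the out-neighbors of $v$, so $|N^+(v)| \le 2k$.

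Each $G_i$ is an induced subgraph of $G$. Restricting the orientation to $G_i$ therefore gives every $v \in G_i$ the out-degree $|N^+(v) \cap V_i|$, where $V_i$ is the part containing $v$. An orientation with out-degree at most $D$ certifies $|E(S)| \le D|S|$ for every $S$, and hence $\lambda(G_i) \le D + 1$. It thus suffices to show that, with high probability, $|N^+(v) \cap V_i| = O(\log n)$ for all $v$ and $i$ with $v \in V_i$.

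Fix $v$ and condition on the part $i$ of $v$. The remaining vertices are still placed independently and uniformly, so $|N^+(v) \cap V_i|$ is a sum of at most $2k$ independent Bernoulli$(1/L)$ variables. Its mean is at most $2k/L = O(\log n)$. A Chernoff bound, in the form $\Pr[X \ge \mu' + t] \le e^{-\Omega(t)}$ for $t \ge \mu'$ with an upper bound $\mu' = O(\log n)$ on the mean, shows that $|N^+(v) \cap V_i| \le C\log n$ except with probability $n^{-c}$, for a large enough constant $C$. A union bound over the $n$ vertices completes the proof.

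The only point that needs care is this conditioning: the part of $v$ must be fixed first, so that the variables counted are independent and the pairwise dependence coming from $v$ itself is removed. Beyond that the argument is routine.
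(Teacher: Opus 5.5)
Your proof is correct and follows essentially the same route as the paper: fix an orientation of out-degree $O(\lambda(G))$, note that for each vertex $v$ the number of out-neighbors landing in $v$'s own part has mean $O(\log n)$, and conclude with a Chernoff bound and a union bound over the vertices. The degeneracy ordering is just one concrete choice of such an orientation; its acyclicity is never used, since any orientation restricts to induced subgraphs. Your explicit conditioning on $v$'s part and the out-degree-to-arboricity step fill in details that the paper leaves implicit.
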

\fullOnly{
\begin{proof}
Consider an orientation of the edges in $G$ with out-degree $O(\lambda(G))$. Now, consider some fixed vertex $v$. In expectation, the number of out-neighbors that are in the same part as $v$ is $O(\lambda(G))/L = O(\log n)$, and a Chernoff Bound implies that this bound holds with high probability. Hence, a union boun implies that each node has $O(\log n)$ out-edges in each part with high probability, and therefore $\max_{i \in [L]} \lambda(G_i) = O(\log n)$ with high probability.
\end{proof}
}
\subsection{(Partial) Layer Assignment}
We use the following notation for the layer assignment. We note this is conceptually similar to the partition $H_1\sqcup \ldots \sqcup H_L$ discussed in the introduction, but we use the function notation here so that we have an explicit reference to the layer number of each given vertex $v$, via the related layer assignment function $\ell_G(v)$.

\begin{definition}[Partial Layer Assignment]
\label{def:partial-layer-assignment}
Let \(G\) be a graph, and let \(L\) and $d$ be positive integers. 
A \emph{partial layer assignment} of \(G\) with \(L\) layers and out-degree $d$
is a function 
\[
\ell_G \colon V(G) \to [L] \cup \{\infty\}
\]
such that for every vertex \(v \in V(G)\) with \(\ell_G(v) \neq \infty\), 
the following holds:
\[
\bigl|\{\,u \in N_G(v) : \ell_G(u) \geq \ell_G(v)\}\bigr| 
\;\le\; d.
\]
\end{definition}

In our algorithm, each node computes a partial layer assignment using only its local view. As a result, a node might receive different layer numbers from different partial assignments. We combine these different layer assignments, by simply assigning each node the smallest layer number it obtains from any of the partial assignments. The following claim shows that taking the minimum of two partial layer assignments yields another valid partial layer assignment with the same number of layers and bounded out-degree.

\begin{claim}[Min of two partial layer assignments]
\label{claim:min-of-pla}
Let \(G\) be a graph, and let \(L\) and \(d\) be positive integers.
Suppose
\[
\ell_G^{(1)} \colon V(G) \to [L] \cup \{\infty\}
\quad\text{and}\quad
\ell_G^{(2)} \colon V(G) \to [L] \cup \{\infty\}
\]
are two partial layer assignments of \(G\) with \(L\) layers and out-degree $d$.
Define \(\ell_G\) by
\[
\ell_G(v) \;=\; \min\bigl\{\ell_G^{(1)}(v), \,\ell_G^{(2)}(v)\bigr\}
\quad
\text{for all } v \in V(G).
\]
Then \(\ell_G\) is also a partial layer assignment of \(G\) with \(L\) layers and out-degree $d$.
\end{claim}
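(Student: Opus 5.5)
The plan is to verify the defining condition of \cref{def:partial-layer-assignment} directly for \(\ell_G\). First, \(\ell_G\) maps into \([L]\cup\{\infty\}\), since it is the minimum of two values in that set (with \(\infty\) treated as larger than every element of \([L]\)). So only the out-degree condition needs checking.

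Fix a vertex \(v\) with \(\ell_G(v)\neq\infty\). By the definition of the minimum, there is an index \(j\in\{1,2\}\) with \(\ell_G(v)=\ell_G^{(j)}(v)\). This common value is finite, so the condition for \(\ell_G^{(j)}\) applies at \(v\) and gives
\[
\bigl|\{u\in N_G(v) : \ell_G^{(j)}(u)\ge \ell_G^{(j)}(v)\}\bigr|\le d .
\]

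The key step is a set inclusion. Take any \(u\in N_G(v)\) with \(\ell_G(u)\ge \ell_G(v)\). Since \(\ell_G^{(j)}(u)\ge \ell_G(u)\), we get
\[
\ell_G^{(j)}(u)\;\ge\;\ell_G(u)\;\ge\;\ell_G(v)=\ell_G^{(j)}(v).
\]
Hence
\[
\{u\in N_G(v):\ell_G(u)\ge\ell_G(v)\}\subseteq\{u\in N_G(v):\ell_G^{(j)}(u)\ge\ell_G^{(j)}(v)\},
\]
and the left-hand set has size at most \(d\), as required.

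There is no real obstacle here. The only point needing care is choosing \(j\) as an assignment that attains the minimum at \(v\) itself. The inequality \(\ell_G\le\ell_G^{(j)}\) then holds at the neighbors and equality holds at \(v\), so the comparisons chain in the right direction. Choosing \(j\) by looking at \(u\) instead would break this chain.
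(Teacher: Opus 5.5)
Your proposal is correct and follows essentially the same approach as the paper: choose the assignment \(\ell_G^{(j)}\) that attains the minimum at \(v\), then use \(\ell_G^{(j)}(u) \ge \ell_G(u) \ge \ell_G(v) = \ell_G^{(j)}(v)\) to include \(\{u \in N_G(v) : \ell_G(u) \ge \ell_G(v)\}\) in the corresponding set for \(\ell_G^{(j)}\), whose size is at most \(d\). Your remark that the index must be chosen at \(v\) rather than at \(u\) matches the case split in the paper's proof.
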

\shortOnly{
The proof of \cref{claim:min-of-pla} is deferred to the full version of the paper.

}
\fullOnly{
\begin{proof}
The function \(\ell_G\) clearly maps into \([L]\cup\{\infty\}\).
For any \(v\) with \(\ell_G(v)\neq \infty\), say \(\ell_G(v)=k\), we must show
\[
\left|\{\,u \in N_G(v) : \ell_G(u)\ge k\}\right|
\;\le\;d.
\]
Since \(k = \min(\ell_G^{(1)}(v), \ell_G^{(2)}(v)\)), it either holds $k = \ell_G^{(1)}(v)$ or $k = \ell_G^{(2)}(v)$.
If \(\ell_G^{(1)}(v)=k\), then 
\[
\{\,u \in N_G(v) : \ell_G(u)\ge k\}
\;\subseteq\;
\{\,u \in N_G(v) : \ell_G^{(1)}(u)\ge \ell_G^{(1)}(v)\},
\]
which has size at most $d$ because \(\ell_G^{(1)}\) is a partial layer assignment with out-degree $d$. 
A symmetric argument applies if \(\ell_G^{(2)}(v)=k\). Hence \(\ell_G\) also satisfies 
the out-degree condition, completing the proof.
\end{proof}
}
\begin{definition}[Strictly Increasing Paths and Path Counts]
\label{def:paths}
Let $G$ be a graph, $L$ be a positive integer, and let 
\(\ell_G \colon V(G) \to [L] \cup \{\infty\}\)
be a partial layer assignment of $G$. 

A path \(P = (v_1,v_2,\ldots,v_k)\) in \(G\) is called 
\emph{strictly increasing (with respect to \(\ell_G\))} 
if 
\[
\ell_G(v_1) \;<\; \ell_G(v_2) \;<\;\cdots\;<\; \ell_G(v_k) < \infty.
\]

For each vertex \(v \in V(G)\), define
\(\mathrm{NumPathsIn_{G, \ell_G}}(v)\) to be the number of distinct 
strictly increasing paths in \(G\) that end at \(v\), and
\(\mathrm{NumPathsOut_{G, \ell_G}}(v)\) to be the number of distinct 
strictly increasing paths in \(G\) that start at \(v\).
\end{definition}
We use the following lemma to show that $\mathrm{NumPathsIn_{G, \ell_G}}(v)$ is small for most nodes $v$. We then show that nodes with small $\mathrm{NumPathsIn_{G, \ell_G}}(v)$ will be assigned a layer by our algorithm.
\begin{lemma}
\label{lem:double_counting}
Leg $G$ be a graph, and $L$ and $d$ be positive integers. Let $\ell_G \colon V(G) \mapsto [L]$ be a complete layer assignment with out-degree $d \geq 2$. Then, it holds that

\begin{align*}
\sum_{v \in V(G)} \mathrm{NumPathsIn_{G, \ell_G}}(v) 
&= \sum_{v \in V(G)} \mathrm{NumPathsOut_{G, \ell_G}}(v) \\
&\leq |V(G)| \cdot \max_{v \in V(G)}NumPathsOut_{G, \ell_G}(v) \\
&\leq |V(G)| \cdot \left(\sum_{j=0}^{L-1} d^j\right). \\
&\leq |V(G)| \cdot d^L.
\end{align*}

\end{lemma}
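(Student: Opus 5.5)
The plan is to prove the chain one (in)equality at a time. For the first equality I will double count. Both sums equal the total number of strictly increasing paths in $G$: each such path $P=(v_1,\dots,v_k)$ has exactly one start $v_1$ and one end $v_k$. Hence it is counted exactly once in $\mathrm{NumPathsOut}_{G,\ell_G}(v_1)$ and exactly once in $\mathrm{NumPathsIn}_{G,\ell_G}(v_k)$. Since $\ell_G$ is complete, every vertex has a finite layer, so the condition $<\infty$ is automatic. I will treat single-vertex paths consistently on both sides; they contribute $1$ to both counts. The second inequality is trivial: bound each summand by the maximum.

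The main step is the bound $\mathrm{NumPathsOut}_{G,\ell_G}(v)\le\sum_{j=0}^{L-1}d^j$ for every $v$. I will count paths by their number of edges $j$. A strictly increasing path visits strictly increasing layers in $[L]$, so it has at most $L$ vertices, hence $j\le L-1$. For a fixed start $v$, I claim the number of strictly increasing paths with exactly $j$ edges starting at $v$ is at most $d^j$, and I will prove this by induction on $j$. The case $j=0$ gives the single trivial path. For the step, each such path with $j+1$ edges is obtained by extending a path with $j$ edges ending at some $w$ by a neighbor $u$ with $\ell_G(u)>\ell_G(w)$. The out-degree condition of \Cref{def:partial-layer-assignment} applied at $w$ (finite layer) says at most $d$ neighbors $u$ satisfy $\ell_G(u)\ge\ell_G(w)$, so there are at most $d$ extensions. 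Summing over $j=0,\dots,L-1$ gives the third inequality. The same induction also works by peeling off the first vertex and recursing on $\mathrm{NumPathsOut}$ of its successor.

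Finally, $\sum_{j=0}^{L-1}d^j=(d^L-1)/(d-1)\le d^L$ since $d\ge2$ gives $d-1\ge1$. This is where the hypothesis $d\ge 2$ is used. There is no real obstacle. The only care needed is the bookkeeping in the induction, namely making sure the extension count uses the strict-versus-weak inequality correctly, and checking that the path-length cap comes from the $L$ distinct layer values.
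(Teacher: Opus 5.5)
Your proof is correct and uses essentially the paper's argument: double counting for the equality, then a branching bound based on two facts — at most $d$ neighbors lie in strictly higher layers, and a strictly increasing path has at most $L$ vertices. The paper instead inducts downward on the layer $\ell_G(v)$ via $\mathrm{NumPathsOut}_{G,\ell_G}(v) = 1 + \sum_{u \in N_G(v),\, \ell_G(u) > \ell_G(v)} \mathrm{NumPathsOut}_{G,\ell_G}(u)$, which gives the slightly sharper per-vertex bound $\sum_{j=0}^{L-\ell_G(v)} d^j$, while you induct on the number of edges and also spell out the final step $\sum_{j=0}^{L-1} d^j \le d^L$ from $d \ge 2$, which the paper leaves implicit.
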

\shortOnly{
The proof of \cref{lem:double_counting} is deferred to the full version of the paper.
}
\fullOnly{
\begin{proof}
The first equality follows by a simple double-counting argument. Thus, it suffices to show that for every $v \in V(G)$, $|NumPathsOut_{G, \ell_G}(v)| \leq \sum_{j=0}^{L-1} d^j$.
We prove a slightly stronger statement by induction on the layer \(i = \ell_G(v)\):
\[
\mathrm{NumPathsOut_{G, \ell_G}}(v)
\;\le\;
\sum_{j=0}^{L - i} d^j
\quad
\text{for all vertices } v \text{ with } \ell_G(v) = i.
\]
\medskip
\noindent
\textbf{Base Case (\(\ell_G(v) = L\)):}  
No vertex lies in a strictly larger layer than \(L\), so every strictly increasing path from \(v\) 
must be the single-vertex path \((v)\). Thus 
\(\mathrm{NumPathsOut_{G, \ell_G}}(v) = 1 = \sum_{j=0}^{0} d^j\), as required.

\medskip
\noindent
\textbf{Inductive Step:}
Assume the claim for all vertices in layers \(> i\). 
Let \(v\) satisfy \(\ell_G(v) = i < L\). 
Any strictly increasing path from \(v\) is either just \((v)\) 
or extends to a neighbor \(u\) with \(\ell_G(u) > i\). 
Hence,
\[
\mathrm{NumPathsOut_{G, \ell_G}}(v)
\;=\;
1 + \sum_{\substack{u \in N_G(v) \\ \ell_G(u) > i}} 
\mathrm{NumPathsOut_{G, \ell_G}}(u).
\]
Since \(v\) has at most $d$ neighbors in higher layers, and by the inductive hypothesis 
\(\mathrm{NumPathsOut_{G, \ell_G}}(u) \le \sum_{j=0}^{L - (i+1)} d^j\),
we get
\[
\mathrm{NumPathsOut_{G, \ell_G}}(v)
\;\le\;
1 
\;+\;
d
\cdot 
\sum_{j=0}^{L - i - 1} d^j
\;=\;
\sum_{j=0}^{L - i} d^j.
\]
This completes the induction, and hence the proof.
\end{proof}
}
\subsection{Valid Mappings}

During the graph exponentation procedure, each node maintains a rooted tree that captures part of its local neighborhood. Each node of the tree corresponds to a node in the original graph, and different nodes of the tree can correspond to the same node. Each edge of the tree maps to an edge in the graph and we also maintain the property that the children of a given tree node map to different nodes in the original graph.   

\begin{definition}[\textbf{Trees with Valid Mappings}]
Let $G$ be a graph, $T$ be a rooted tree, and $map: V(T) \mapsto V(G)$. We say that $map$ is a valid mapping if
\begin{enumerate}
    \item for each edge $(x,y)$ in the tree $T$, $\{map(x), map(y)\}$ is an edge in $G$, and
    \item for each node $x \in V(T)$, and any distinct two children $c_1$ and $c_2$ of $x$, it holds that $map(c_1) \neq map(c_2)$.
\end{enumerate}
\end{definition}
\begin{definition}[\textbf{Tree Pruning, with Valid Mappings}]\label{def:treePrune}
Let $G$ be a graph, $T$ be a rooted tree with a valid mapping $map\colon V(T) \mapsto V(G)$. Let $T'$ be a rooted tree, with the same root, resulting from removing some nodes in $T$. We define the corresponding valid mapping 
$\mathrm{map}'\colon V(T') \to V(G)$ 
by restricting $\mathrm{map}$ to $V(T')$. 
In other words, for every node $v \in V(T')$, we set 
$\mathrm{map}'(v) := \mathrm{map}(v)$.
\end{definition}

The tree attachement procedure will be used in each of the $O(\log \log n)$ graph exponentation steps.

\begin{definition}[\textbf{Tree Attachement, with Valid Mappings}]\label{def:treeAttachement}
Let $G$ be a graph, $T$ be a rooted tree with a valid mapping $map\colon V(T) \mapsto V(G)$ and let $x_1,x_2,\ldots, x_{\eta}$ be distinct leaves in $T$. Moreover, let $T_1,T_2,\ldots, T_{\eta}$ be rooted trees and $map_1,map_2,\ldots, map_\eta$ such that $map_i \colon V(T_i) \mapsto V(G)$ is a valid mapping, and where the root $r_i$ of $T_i$ satisfies $map(x_i) = map_i(r_i)$. We define the \textit{attachment} $T'$ to be the rooted tree resulting from $T$ by replacing each $x_i$ with (a fresh copy of) $T_i$, with the new mapping $map'$ defined as the same as $map$ for nodes of $T'$ that were in $V(T)\setminus (\cup_{i=1}^{\eta} \{x_i\})$, and the same as $map_i$ for nodes in $T_i$ that are now in $T'$.
\end{definition}

\begin{definition}[\textbf{Missing Neighbors}]
\label{def:missing}
    Let $G$ be a graph, $T$ be a rooted tree and $map \colon V(T) \mapsto V(G)$ be a valid mapping. For a node $x \in V(T)$, we define 
    \[Missing_{G,T,map}(x) := N_G(map(x)) \setminus \{map(c) \colon c \in children_T(x) \}.\]
\end{definition}

\begin{definition}[\textbf{Strictly Monotonically Reachable}]
\label{def:strictly monotonic reachable}
Let \(G\) be a graph and let \(\ell_G \colon V(G) \to \mathbb{N} \cup \{\infty\}\) be a partial layer assignment.  
Further, let \(T\) be a rooted tree with root \(r\), and let \(\mathit{map} \colon V(T) \to V(G)\) be a valid mapping.  
A vertex \(x \in V(T)\) is said to be \emph{strictly monotonically reachable} with respect to \(\ell_G\) if, 
when we denote by 
\[
  x = x_1, \, x_2, \, \ldots, \, x_k = r
\]
the unique path from \(x\) to the root \(r\) in \(T\), it holds that
\[
  \ell_G\bigl(\mathit{map}(x_1)\bigr) 
  \;<\; 
  \ell_G\bigl(\mathit{map}(x_2)\bigr) 
  \;<\; 
  \cdots 
  \;<\; 
  \ell_G\bigl(\mathit{map}(x_k)\bigr).
\]
\end{definition}

\section{Edge-Orientation Algorithm}
We are now ready to explain the different parts of our edge-orientation algorithm.
\subsection{Prune}
The procedure LocalPrune recursively removes the heaviest $k$ subtrees. We will use it for $k = O(\lambda(G))$. The algorithm will be executed locally on a single machine without any communication.

\fullOnly{\begin{algorithm}[H]}
\shortOnly{\begin{algorithm}}
\SetAlgoLined
\KwIn{A rooted tree $T$ with root $r$, a pruning parameter $k$}
\KwOut{A rooted subtree $T_{\text{pruned}}$ of $T$}

\If{$r$ has at most $k$ children}{
\Return the single-node rooted tree consisting only of $r$
}
\ForEach{child $c$ of $r$}{
    Let $T_c$ be the subtree rooted at $c$
    
    $T_{c, \text{pruned}} \gets \text{LocalPrune}(T_c, k)$
}
\BlankLine
\tcp{ Identify and remove the largest $k$ pruned subtrees among the children}
Let $\mathcal{C} \gets \{\,T_{c,\mathrm{pruned}} : c \text{ is a child of }r\}\,$

Sort the subtrees in $\mathcal{C}$ by size in \emph{descending} order (ties broken arbitrarily)

Remove the first $k$ subtrees from $\mathcal{C}$ (i.e., the $k$ largest)

\BlankLine
\tcp{ Form the pruned tree by attaching the remaining children to $r$}
Create a new tree $T_{\mathrm{pruned}}$ with root $r$

\ForEach{$T_{c,\mathrm{pruned}} \in \mathcal{C}$}{
    Attach $T_{c,\mathrm{pruned}}$ as a subtree of $r$ in $T_{\mathrm{pruned}}$
}

\Return $T_{\mathrm{pruned}}$
\caption{LocalPrune}
\label{alg:local_prune}
\end{algorithm}

\begin{claim}
\label{claim:pruned_missing}
Let $G$ be a graph, $T$ be a rooted tree, and let $map \colon V(T) \mapsto V(G)$ be a valid mapping.  
Fix any integer $k \ge 1$, and  consider $T_{\mathrm{pruned}} \gets LocalPrune(T,k)$ with the induced mapping
\[
  map_{\mathrm{pruned}}(x) \;:=\; map(x) 
  \quad\text{for every } x \in V\bigl(T_{\mathrm{pruned}}\bigr).
\]
Then, for every $x \in V\bigl(T_{\mathrm{pruned}}\bigr)$, it holds that
\[
  Missing_{G,T_{\mathrm{pruned}},map_{\mathrm{pruned}}}(x) 
  \;\;\le\;\; Missing_{G,T,map}(x) \;+\; k.
\]
\end{claim}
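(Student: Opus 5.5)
The claim compares sets by cardinality, so I read it as $|Missing_{G,T_{\mathrm{pruned}},map_{\mathrm{pruned}}}(x)| \le |Missing_{G,T,map}(x)| + k$. The plan is to show the sharper structural fact that every node $x$ surviving in $T_{\mathrm{pruned}}$ either keeps all of its children from $T$, or loses exactly $k$ of them. This follows by inspecting the recursion of LocalPrune (\Cref{alg:local_prune}).

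First, I make precise what ``surviving'' means. The recursion on the subtree $T_x$ rooted at $x$ produces $LocalPrune(T_x,k)$. By induction on the recursion, $T_{\mathrm{pruned}}$ is obtained from $T$ by deleting some subtrees, so the children of $x$ in $T_{\mathrm{pruned}}$ are a subset of its children in $T$. Moreover, for any surviving $x$, its subtree in $T_{\mathrm{pruned}}$ is exactly the tree returned by the recursive call $LocalPrune(T_x,k)$. This holds because the outer calls only discard whole returned subtrees and attach the others unchanged. I would argue it formally by induction on the depth of the recursion.

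Next, I analyse the call $LocalPrune(T_x,k)$ in two cases.
\begin{itemize}
\item If $x$ has at most $k$ children in $T$, the call returns the single node $x$. Then $x$ loses all its children, which number at most $k$.
\item Otherwise $x$ has more than $k$ children. Each child $c$ yields a pruned subtree $T_{c,\mathrm{pruned}}$, whose root is still $c$. Exactly the $k$ largest of these are removed, and the rest stay attached with their roots. So exactly $k$ children of $x$ are lost.
\end{itemize}
In both cases the set $C'$ of children of $x$ in $T_{\mathrm{pruned}}$ satisfies $C' \subseteq C$, where $C$ is the set of children of $x$ in $T$, and $|C \setminus C'| \le k$.

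Finally, I translate this into Missing sets using the mapping. Since $map_{\mathrm{pruned}}$ is the restriction of $map$, we have
\[
Missing_{G,T_{\mathrm{pruned}},map_{\mathrm{pruned}}}(x) = N_G(map(x)) \setminus map(C') \subseteq \bigl(N_G(map(x)) \setminus map(C)\bigr) \cup map(C\setminus C').
\]
Therefore
\[
|Missing_{G,T_{\mathrm{pruned}},map_{\mathrm{pruned}}}(x)| \le |Missing_{G,T,map}(x)| + |C\setminus C'| \le |Missing_{G,T,map}(x)| + k.
\]
This argument does not even need the injectivity of $map$ on children, since $|map(C\setminus C')| \le |C\setminus C'|$ holds regardless.

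The only mildly delicate step is the first one: confirming that the children of a surviving node in the final tree are determined solely by that node's own recursive call, and are not altered by its ancestors. I would handle this with a short induction on the recursion structure.
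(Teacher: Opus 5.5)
Your proposal is correct and follows essentially the same argument as the paper, which inducts on the height of $T$: a root with at most $k$ children loses at most $k$ of them, a root with more than $k$ children loses exactly $k$ (each pruned child subtree is nonempty), and the kept subtrees are attached unchanged. You recast this as an explicit invariant (a surviving node's subtree is its own recursive output) plus a per-node case analysis, and you spell out the inclusion $N_G(map(x)) \setminus map(C') \subseteq (N_G(map(x)) \setminus map(C)) \cup map(C \setminus C')$, which the paper leaves implicit. This is a clarification of the same proof rather than a different route.
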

\shortOnly{The proof of \cref{claim:pruned_missing} is deferred to the full version of the paper.}
\fullOnly{
\begin{proof} We present a proof by induction on the height of \(T\): 

\textbf{Base case.}
If the root \(r\) of \(T\) has at most \(k\) children, then the pruning algorithm returns the single-node tree just consisting of $r$. As $r$ looses at most $k$ children, it trivially holds that 
\[
  Missing_{G,T_{\mathrm{pruned}},map_{\mathrm{pruned}}}(r) 
  \;\;\le\;\; Missing_{G,T,map}(r) \;+\; k.
\] 
No other nodes remain, so the claim trivially holds.

\smallskip
\textbf{Inductive step.}
Suppose \(r\) has more than \(k\) children \(c_1,\dots,c_m\).  
For each \(c_j\), consider the subtree \(T_{c_j}\) rooted at \(c_j\) and let \(T_{c_j, pruned}\) be its pruned version, and $map_{c_j, pruned} \colon V(T_{c_j, pruned}) \mapsto V(G)$ its mapping.  
By the inductive hypothesis, every node \(x\) in \(T_{c_j, pruned}\) satisfies
\[
  \text{Missing}_{G,T_{c_j,pruned},\,\mathrm{map}_{c_j, pruned}}(x)
  \;\le\;
  \text{Missing}_{G,T_{c_j},\,\mathrm{map}}(x)
  \;+\; k.
\]
The algorithm then removes the largest \(k\) pruned subtrees from \(r\)'s children.  
Hence, as each $T_{c_j, pruned} $ contains at least one node, \(r\) loses exactly \(k\) children, causing its missing set to increase by at most \(k\). 

Thus, for every \(x\) in the final pruned tree \(T_{\mathrm{pruned}}\), 
\(\text{Missing}_{G,T_{\mathrm{pruned}},\,\mathrm{map}_{pruned}}(x)\)
is at most
\(\text{Missing}_{G,T,\,\mathrm{map}}(x) + k\).
\end{proof}
}

\begin{lemma}\label{lem:local-prune-size}
Let $G$ be a graph, and let $T$ be a rooted tree with root $r$. 
Suppose we have a \emph{valid} mapping 
\[
\mathrm{map} \colon V(T) \,\to\, V(G).
\]
Let $L$ and $d$ be positive integers, and let 
\(\ell_G \colon V(G)\to [L]\cup\{\infty\}\) 
be a partial layer assignment of $G$ with out-degree $d$ satisfying that $\ell_G(map(r)) \neq \infty$. 
Furthermore, let $k \geq d$ be a positive integer, and 
obtain $T_{\mathrm{pruned}}$ by calling 
\(\textsc{LocalPrune}(T, k)\). 
Then,
\[
\bigl|\,V\bigl(T_{\mathrm{pruned}}\bigr)\bigr|
\;\le\;
\mathrm{NumPathsIn_{G, \ell_G}}\!\bigl(\mathrm{map}(r)\bigr).
\]
\end{lemma}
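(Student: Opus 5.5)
The plan is to prove the bound by induction on the height of $T$, following the recursive structure of \textsc{LocalPrune}. The inductive step will use a recursion for $\mathrm{NumPathsIn}$. Write $v := \mathrm{map}(r)$ and assume $\ell_G(v) \neq \infty$. Every strictly increasing path ending at $v$ is either the single-vertex path $(v)$, or it is a strictly increasing path ending at some neighbor $u \in N_G(v)$ with $\ell_G(u) < \ell_G(v)$, extended by $v$. Such an extension is still a simple path, since all earlier vertices have layer strictly below $\ell_G(v)$. Hence
\[
\mathrm{NumPathsIn}_{G,\ell_G}(v) \;=\; 1 + \sum_{\substack{u \in N_G(v)\\ \ell_G(u) < \ell_G(v)}} \mathrm{NumPathsIn}_{G,\ell_G}(u).
\]
This is the in-path analogue of the recursion used in the proof of \cref{lem:double_counting}.

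For the induction itself, the base case is when $r$ has at most $k$ children. Then $T_{\mathrm{pruned}}$ is the single node $r$, and the bound holds because $\mathrm{NumPathsIn}(v) \ge 1$, counting the path $(v)$.

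Otherwise $r$ has $m > k$ children $c_1,\dots,c_m$, and each $T_{c_j,\mathrm{pruned}} = \textsc{LocalPrune}(T_{c_j},k)$. Split the children into two groups.
\begin{itemize}
\item The \emph{good} children are those with $\ell_G(\mathrm{map}(c_j)) < \ell_G(v)$.
\item The \emph{bad} children are those with $\ell_G(\mathrm{map}(c_j)) \ge \ell_G(v)$, including layer $\infty$.
\end{itemize}
By validity of the mapping, the $\mathrm{map}(c_j)$ are distinct neighbors of $v$. Since $\ell_G$ is a partial layer assignment with out-degree $d$ and $\ell_G(v)\ne\infty$, at most $d \le k$ children are bad. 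So there are at least $m-k$ good children.

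The key step, and the main obstacle, is an exchange argument. The algorithm keeps exactly $m-k$ subtrees, namely the $m-k$ smallest pruned subtrees. The sum of the $m-k$ smallest values in a multiset is at most the sum of any $m-k$ of its values. Choosing $m-k$ good children therefore gives
\[
\sum_{\text{kept } c} |V(T_{c,\mathrm{pruned}})| \;\le\; \sum_{c \text{ good}} |V(T_{c,\mathrm{pruned}})|.
\]
For each good child, $\ell_G(\mathrm{map}(c))$ is finite, so the induction hypothesis applies to $T_c$ with its restricted (still valid) mapping. This yields $|V(T_{c,\mathrm{pruned}})| \le \mathrm{NumPathsIn}(\mathrm{map}(c))$.

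Because the good children map to distinct neighbors $u$ of $v$ with $\ell_G(u)<\ell_G(v)$, summing gives at most $\sum_{u} \mathrm{NumPathsIn}(u)$ over exactly the index set in the recursion. Adding $1$ for the root and applying the recursion gives $|V(T_{\mathrm{pruned}})| \le \mathrm{NumPathsIn}_{G,\ell_G}(v)$. This completes the induction.
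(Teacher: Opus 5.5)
Your proof is correct and follows essentially the same argument as the paper: split the children of $r$ by whether $\ell_G(\mathrm{map}(c)) < \ell_G(\mathrm{map}(r))$, observe that at most $d \le k$ children fail this, bound the kept subtrees by the sum over the lower-layer children, apply the induction hypothesis, and finish with the in-path recursion for $\mathrm{NumPathsIn}$. The only differences are cosmetic. You induct on the height of $T$ where the paper inducts on $\ell_G(\mathrm{map}(r))$; this does not matter, since the hypothesis is only invoked on lower-layer children, whose subtrees are smaller in both measures. Your exchange argument also spells out the step that the paper states in a single line.
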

\shortOnly{
The proof of \cref{lem:local-prune-size} is deferred to the full version of the paper.
} 
\fullOnly{
\begin{proof} We present a proof by an induction on $\ell_G(\mathrm{map}(r))$:

\textbf{Base Case: \(\ell_G(\mathrm{map}(r)) = 1\).} Since \(\ell_G\) has out-degree $d$, any vertex in layer \(1\) can have at most 
$d$ neighbors in $G$.  
Because \(\mathrm{map}\) is valid, this implies \(r\) has at most \(d \le k\) children in \(T\). 
Hence, the algorithm \(\textsc{LocalPrune}(T,k)\) prunes \(T\) to the single-node tree \(\{r\}\). 
Thus, \(\bigl|V(T_{\mathrm{pruned}})\bigr|\le 1\). 
On the other hand, there is at least one strictly increasing path ending at \(\mathrm{map}(r)\) (the trivial path), 
so 
\(\mathrm{NumPathsIn_{G, \ell_G}}\!\bigl(\mathrm{map}(r)\bigr)\ge1\). 
Therefore,
\[
\bigl|V\bigl(T_{\mathrm{pruned}}\bigr)\bigr|
\;\le\;
1
\;\le\;
\mathrm{NumPathsIn_{G, \ell_G}}\!\bigl(\mathrm{map}(r)\bigr),
\]
which settles the base case.

\textbf{Inductive Step: \(\ell_G(\mathrm{map}(r)) = j > 1\).}
Assume the lemma holds for all rooted trees \(T'\) whose root \(r'\) satisfies 
\(\ell_G(\mathrm{map}(r')) < j\).  
We prove it for our current root \(r\) with \(\ell_G(\mathrm{map}(r)) = j\).  

\begin{enumerate}
\item
\textbf{Split the children of \(r\) by layer.}
Partition the children of \(r\) into two sets:
\[
C_{< j} 
\;=\;
\bigl\{\, c \in \mathrm{children}_T(r) : \ell_G(\mathrm{map}(c)) < j\bigr\}
\quad\text{and}\quad
C_{\ge j}
\;=\;
\mathrm{children}_T(r) \,\setminus\, C_{< j}.
\]
By the partial layer assignment’s out-degree condition (and since $map$ is valid and \(\ell_G(\mathrm{map}(r))\neq\infty\)), 
the vertex \(\mathrm{map}(r)\) can have at most \(d\) neighbors in the same or a \emph{larger} layer;  
in particular,
\[
\bigl|C_{\ge j}\bigr|
\;\;\le\;\;
d
\;\;\le\;\;
k.
\]

\item
\textbf{Recursively prune each child.}
For every child \(c\), let \(T_c\) be the subtree rooted at \(c\).  
The algorithm calls \(\textsc{LocalPrune}(T_c, k)\) to produce \(T_{c,\mathrm{pruned}}\).  

\begin{itemize}
\item
If \(c \in C_{< j}\), then \(\ell_G(\mathrm{map}(c)) < j\).  
By the inductive hypothesis, 
\[
\bigl|V\bigl(T_{c,\mathrm{pruned}}\bigr)\bigr|
\;\;\le\;\;
\mathrm{NumPathsIn_{G,\ell_G}}\!\bigl(\mathrm{map}(c)\bigr).
\] 
\end{itemize}

\item
\textbf{Removing the largest \(k\) pruned subtrees.}  
After pruning each child’s subtree, \(\textsc{LocalPrune}(T,k)\) sorts these pruned subtrees 
by size (in descending order) and removes the first \(k\) from consideration.  
Denote by \(\mathcal{R}\) the set of indices (children) whose subtrees remain.  
Thus, 
\[
\bigl|V\bigl(T_{\mathrm{pruned}}\bigr)\bigr|
\;=\;
1
\;+\;
\sum_{c \in \mathcal{R}}
\bigl|V\bigl(T_{c,\mathrm{pruned}}\bigr)\bigr|.
\]

\item
\textbf{Bounding the size of \(T_{\mathrm{pruned}}\).}
Because $|C_{\geq j}| \leq k$ and the $k$ largest pruned subtrees are removed, we get
\[\bigl|V\bigl(T_{\mathrm{pruned}}\bigr)\bigr|
\;=\;
1
\;+\;
\sum_{c \in \mathcal{R}}
\bigl|V\bigl(T_{c,\mathrm{pruned}}\bigr)\bigr| \leq 1 + \sum_{c \in C_{< j}} |V(T_{c,pruned})|.
\]
Hence,

\begin{align*}
    \bigl|V\bigl(T_{\mathrm{pruned}}\bigr)\bigr| &\leq 1 + \sum_{c \in C_{< j}} |V(T_{c,pruned})| \\
    &\leq  1 + \sum_{c \in C_{< j}} \mathrm{NumPathsIn_{G,\ell_G}}\!\bigl(\mathrm{map}(c)\bigr) \\
    &\leq 1 + \sum_{u \in N_G(map(r)) \colon \ell_G(u) < \ell_G(map(r))} \mathrm{NumPathsIn_{G,\ell_G}}\!\bigl(u\bigr)  \\
    &= \mathrm{NumPathsIn_{G,\ell_G}}\!\bigl(map(r)\bigr). 
\end{align*}
\end{enumerate}
\end{proof}
}

\subsection{Exponentiate + Prune}
We now describe our graph exponentiation algorithm, which runs in $s = O(\log \log n)$
steps. In this algorithm, every node \(v\) maintains a rooted tree along with a valid mapping that assigns the tree's root to \(v\). Crucially, the size of each rooted tree is kept within a budget \(B\).

Each of the \(s\) steps proceeds in two phases:
\begin{enumerate}
    \item \textbf{Local Pruning:} Each node prunes its rooted tree using the procedure detailed in \cref{alg:local_prune}. This step is entirely local (each tree is stored on a single machine) and thus requires no communication.
    \item \textbf{Graph Exponentiation:} Using the pruned trees, the algorithm then performs a graph exponentiation step via the attachment procedure defined in \cref{def:treeAttachement}.
\end{enumerate}

\fullOnly{\begin{algorithm}[H]}
\shortOnly{\begin{algorithm}}
\SetAlgoLined
\KwIn{A graph $G$, a budget $B$, a pruning parameter $k$, and a step count $s$}
\fullOnly{
\KwOut{\mbox{For each $v \in V(G)$, a rooted subtree $T_v$ with a valid mapping $map_v \colon V(T_v)\to V(G)$.}}
}
\shortOnly{
\KwOut{\mbox{$\forall v \in V(G)$, a rooted tree $T_v$ with a mapping $map_v$.}}
}

\BlankLine
\textbf{Initialization:}\\
\ForEach{$v\in V(G)$ with $|N_G(v)| < B$}{
  \textbf{Construct} a rooted tree $T^{(0)}_v$ with $|N_G(v)|$ children.\\
  \textbf{Define} $map^{(0)}_v \colon V(T_v^{(0)}) \mapsto V(G)$ so that the root of $T^{(0)}_v$ maps to $v$ and each child maps to a distinct neighbor in $N_G(v)$.\\
  \textbf{Mark} $v$ as \emph{active}.
}
\ForEach{$v\in V(G)$ with $|N_G(v)| \geq B$}{
  \textbf{Construct} a rooted tree $T^{(0)}_v$ with a single node that maps to $v$.\\
  \textbf{Mark} $v$ as \emph{inactive}.
}

\BlankLine
\For(\tcp*[h]{Main Loop}){$i=1$ \KwTo $s$}{
  \BlankLine
  \textbf{Local Prune Step:}\\
  \ForEach{$v\in V(G)$}{
    $T^{(i-1)}_{v,\mathrm{pruned}} \gets \text{LocalPrune}\bigl(T^{(i-1)}_v, k\bigr)$.\\
    \textbf{Restrict} $map^{(i-1)}_v$ to obtain $map^{(i-1)}_{v,\mathrm{pruned}}$ on $T^{(i-1)}_{v,\mathrm{pruned}}$.\\[4pt]
    \If(\tcp*[h]{Check size vs.\ budget $B$}){$\bigl|V\bigl(T^{(i-1)}_{v,\mathrm{pruned}}\bigr)\bigr| > \sqrt{B}$}{
      \textbf{Mark} $v$ as \emph{inactive}.
    }
  }

  \BlankLine
  \textbf{Exponentiation / Attachment Step:}\\
  \ForEach{$v \in V(G)$}{
    \uIf{$v$ is \emph{inactive}}{
      $T^{(i)}_v \gets T^{(i-1)}_{v,\mathrm{pruned}}$ \tcp*[f]{No further expansion}
    }
    \Else{
    Let $x_1,\dots,x_\eta$ be the leaves at distance exactly $2^{i-1}$ from the root of $T^{(i-1)}_{v,\mathrm{pruned}}$ that map to an active vertex.\\
\ForEach{$x_j$}{
  Let $u_j \gets map^{(i-1)}_{v,\mathrm{pruned}}(x_j)$ and retrieve the subtree 
  $T^{(i-1)}_{u_j,\mathrm{pruned}}$ together with its valid mapping 
  $map^{(i-1)}_{u_j,\mathrm{pruned}}$. 
}
\textbf{Attach} these pruned subtrees 
$\bigl(T^{(i-1)}_{u_j,\mathrm{pruned}},\,map^{(i-1)}_{u_j,\mathrm{pruned}}\bigr)$ 
at the leaves $x_j$ of $T^{(i-1)}_{v,\mathrm{pruned}}$ (see Def.~\ref{def:treeAttachement}).\\
\textbf{Obtain} $T^{(i)}_v$ and its valid mapping $map^{(i)}_v$.
    }
  }
}

\BlankLine
\textbf{Return} $T^{(s)}_{v}$ and $map^{(s)}_{v}$ for every $v \in V(G)$.

\caption{ExponentiateAndLocalPrune}
\label{alg:exponentiatepluslocalprune}

\end{algorithm}

\begin{claim}
For every $i \in \{0,1,\ldots,s\}$ and $v \in V(G)$, it holds that $map^{(i)}_v \colon V(T_v^{(i)}) \mapsto V(G)$ is a valid mapping.
\end{claim}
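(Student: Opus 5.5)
We argue by induction on $i$, and within each step we show that the two operations of the main loop, restriction to a pruned subtree (Def.~\ref{def:treePrune}) and attachment (Def.~\ref{def:treeAttachement}), each preserve validity.

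\textbf{Base case $i=0$.} If $|N_G(v)|\ge B$, then $T^{(0)}_v$ is a single node, so it has no edges and no sibling pairs, and validity holds vacuously. Otherwise every tree edge joins the root, mapped to $v$, with a child mapped to a neighbor of $v$, so condition~1 holds. The children are mapped to distinct neighbors by construction, so condition~2 holds.

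\textbf{Pruning step.} Assume $map^{(i-1)}_v$ is valid for every $v$. LocalPrune only deletes nodes, and what it keeps is a rooted subtree with the same root. Every edge of $T^{(i-1)}_{v,\mathrm{pruned}}$ is therefore an edge of $T^{(i-1)}_v$, and the children of a kept node form a subset of its original children. Hence both conditions survive restriction of the mapping.

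\textbf{Attachment step.} If $v$ is inactive, then $T^{(i)}_v$ is the pruned tree and we are done. Otherwise $T^{(i)}_v$ is obtained by replacing leaves $x_j$ with copies of $T^{(i-1)}_{u_j,\mathrm{pruned}}$, where $u_j=map(x_j)$. The root of $T^{(i-1)}_{u_j,\mathrm{pruned}}$ maps to $u_j$, because the root of $T^{(0)}_{u_j}$ maps to $u_j$ and pruning and attachment (which never replaces the root, since $2^{i-1}\ge 1$) keep the root and its image. So the hypothesis $map(x_j)=map_j(r_j)$ of Def.~\ref{def:treeAttachement} holds. This gives three kinds of edges to check.
\begin{itemize}
\item Edges inside $T\setminus\{x_j\}$ or inside some $T_j$ keep their images and are edges of $G$ by the hypothesis.
\item The only new edges go from the parent $p$ of $x_j$ to the root $r_j$. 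Since $map'(r_j)=u_j=map(x_j)$, the image of this edge equals the image of the old edge $\{p,x_j\}$, which is an edge of $G$.
\end{itemize}
For condition~2, a node of $T$ other than the $x_j$ has the same multiset of children images as before, because each $x_j$ is replaced by $r_j$ with the same image. A node inside some $T_j$ keeps exactly its children from $T_j$; here we use that each $x_j$ is a leaf, so no children of $x_j$ are lost or merged.

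The main subtlety is exactly this last point. The argument needs that replacing $x_j$ by $r_j$ preserves the parent's child-image set, and that no node ends up with children coming from two different sources. Both follow because the $x_j$ are distinct leaves that are replaced one-for-one. The argument also needs the invariant, carried along in the induction, that the root of $T^{(i)}_u$ maps to $u$.
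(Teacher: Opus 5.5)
Your proof is correct and follows essentially the same route as the paper. Both argue by induction on $i$: pruning preserves validity because it is just a restriction to a rooted subtree, and attachment preserves it because each replaced leaf $x_j$ and the inserted root share the image $u_j$. You also state two facts the paper uses without stating them: that the root of $T^{(i)}_u$ maps to $u$, and that the $x_j$ are leaves, so no children are lost or merged.
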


\begin{proof}
Recall that a mapping $map \colon V(T)\to V(G)$ is \emph{valid} if:
\begin{enumerate}
  \item For each edge $(x,y)$ in $T$, the pair $\{map(x), map(y)\}$ is an edge in $G$.
  \item For every node $x \in V(T)$, any two distinct children $c_1 \neq c_2$ of $x$ satisfy $map(c_1) \neq map(c_2)$.
\end{enumerate}
We prove the claim by induction on $i$.

\paragraph{Base Case ($i=0$).}
If $|N_G(v)| < B$, then $T^{(0)}_v$ is constructed so that its root is mapped to $v$, and the children of that root are mapped to the distinct neighbors of $v$ in $G$. Thus, every tree edge $(\text{root}, \text{child})$ is mapped to an edge in $G$, and no two children of the root share the same image. Hence $map^{(0)}_v$ is valid for all $v \in V(G)$. If $|N_G(v)| \geq B$, then $T^{(0)}_v$ consists just of a root and is therefore trivially valid.

\paragraph{Inductive Step ($i-1 \to i$).}
Assume that for some $i-1 \ge 0$, the mapping $map^{(i-1)}_u$ is valid for every $u \in V(G)$. We show $map^{(i)}_v$ is valid for each $v \in V(G)$:

\begin{itemize}
    \item \textbf{Pruning.}
    We obtain $T^{(i-1)}_{v,\mathrm{pruned}}$ by removing some nodes from $T^{(i-1)}_v$.  Since $map^{(i-1)}_v$ was valid on $T^{(i-1)}_v$, restricting it to the subtree $T^{(i-1)}_{v,\mathrm{pruned}}$ preserves adjacency of images and the distinctness condition on siblings.

    \item \textbf{Attachment.}  
    By Definition~\ref{def:treeAttachement}, each leaf $x_j$ of $T^{(i-1)}_{v,\mathrm{pruned}}$ (mapped to $u_j = map^{(i-1)}_{v,\mathrm{pruned}}(x_j)$) is \emph{replaced} by the entire subtree $T^{(i-1)}_{u_j,\mathrm{pruned}}$. Concretely,
    \begin{enumerate}
        \item Remove $x_j$ from $T^{(i-1)}_{v,\mathrm{pruned}}$.
        \item Insert $T^{(i-1)}_{u_j,\mathrm{pruned}}$, whose root is also mapped to $u_j$, so its parent edge in the new tree connects the parent of $x_j$ to the root of the inserted subtree.
    \end{enumerate}
    The inductive hypothesis ensures each inserted subtree already has a valid mapping.  Since the parent of $x_j$ was mapped to some node $p$ in $G$, and the new root is also mapped to $u_j$, adjacency is preserved (because $\{p,u_j\}$ is an edge in $G$ when $x_j$ existed).  Moreover, children of different inserted subtrees or siblings from within each valid subtree remain mapped to distinct vertices.
\end{itemize}

Therefore, $T^{(i)}_v$ is mapped validly into $G$, concluding the induction.

\paragraph{Conclusion.}
By induction on $i$, $map^{(i)}_v$ is valid for all $i \in \{0,\ldots,s\}$ and $v \in V(G)$.
\end{proof}

\shortOnly{
The following claim follows by a simple induction.
}
\begin{claim}
\label{claim:tree_sizes}
For every $v \in V(G)$ and every integer $i \in \{0,1,\ldots,s\}$, the tree $T^{(i)}_v$ has at most $B$ nodes.
\end{claim}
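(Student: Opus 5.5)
We argue by induction on $i$, tracking each active or inactive vertex separately. The invariant we maintain is slightly stronger than the statement: for every $v$ and every $i\ge 1$, either $T^{(i)}_v$ is obtained by attaching trees of size at most $\sqrt{B}$ to a tree of size at most $\sqrt{B}$, or $T^{(i)}_v$ is a subtree (obtained by pruning) of an earlier tree $T^{(j)}_v$ with $j<i$. Both cases give size at most $B$.

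For the base case $i=0$, there are two situations. If $|N_G(v)|<B$, then $T^{(0)}_v$ has $|N_G(v)|+1\le B$ nodes. Otherwise $T^{(0)}_v$ is a single node, and $1\le B$ (we assume $B\ge 1$).

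For the inductive step $i-1\to i$, we first note that \textsc{LocalPrune} only removes nodes. Hence $|V(T^{(i-1)}_{v,\mathrm{pruned}})|\le |V(T^{(i-1)}_v)|\le B$ by the inductive hypothesis. We then split into cases according to the status of $v$ after the prune step.

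\emph{Case 1: $v$ is inactive.} Then $T^{(i)}_v=T^{(i-1)}_{v,\mathrm{pruned}}$, which has at most $B$ nodes.

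\emph{Case 2: $v$ is active.} Then $v$ was not marked inactive in this iteration, so $|V(T^{(i-1)}_{v,\mathrm{pruned}})|\le\sqrt{B}$. The attached trees are $T^{(i-1)}_{u_j,\mathrm{pruned}}$ for leaves $x_j$ mapping to active vertices $u_j$. Each such $u_j$ is still active, which means it passed the size check in this same iteration, so $|V(T^{(i-1)}_{u_j,\mathrm{pruned}})|\le\sqrt{B}$. There are at most $\eta\le |V(T^{(i-1)}_{v,\mathrm{pruned}})|\le\sqrt B$ attachment points, and each replaced leaf $x_j$ is removed. Therefore
\[
|V(T^{(i)}_v)|\le \bigl(\sqrt B-\eta\bigr)+\eta\sqrt B\le \sqrt B\cdot\sqrt B = B.
\]

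The only delicate point is the order of operations in Case 2. We must confirm that the activity check of every $u_j$ happens in the prune phase of iteration $i$, before any attachments are made. This holds because the prune loop runs over all $v\in V(G)$ before the exponentiation loop begins. It is also what guarantees that the retrieved trees are the pruned ones of size at most $\sqrt B$.

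A smaller point is that a vertex inactive from initialization has a single-node tree. Such a vertex is never used as an attachment source, since only active $u_j$ are attached, so its tree cannot contribute to the growth of any other tree.
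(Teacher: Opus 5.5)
Your proof is correct and follows the paper's argument: induction on $i$, with a star or single-node base case and an inactive/active split, where in the active case a pruned tree of size at most $\sqrt{B}$ receives at most $\sqrt{B}$ attached pruned trees, each of size at most $\sqrt{B}$. Your additive count $(\sqrt{B}-\eta)+\eta\sqrt{B}$ replaces the paper's multiplicative bound. Your remark that every activity check happens before any attachment makes explicit a step the paper leaves implicit, while the opening ``stronger invariant'' is never actually used.
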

\fullOnly{
\begin{proof}[Proof (by induction on $i$)]
\noindent
\textbf{Base Case ($i=0$).}
If $|N_G(v)| \geq B$, then $T_v^{(0)}$ just consist of a single node. If $ |N_G(v)| < B$, then $T^{(0)}_v$ is a star with $1 + |N_G(v)| \leq B$ nodes.

\medskip

\noindent
\textbf{Inductive Step ($i \to i+1$).}
Assume that for some $i \in  \{0, 1,\ldots,s -1\}$, for every vertex $u \in V(G)$ we have  $\bigl|T^{(i)}_u\bigr| \;\le\; B$.
We must show that $|T^{(i+1)}_v| \le B$ for each $v$.

\smallskip

\emph{Case 1: $v$ is inactive.}
Then $T^{(i+1)}_v = T^{(i)}_{v,\mathrm{pruned}} \subseteq T^{(i)}_v$, so
\[
  \bigl|T^{(i+1)}_v\bigr| \;\le\; \bigl|T^{(i)}_v\bigr|
  \;\le\; B
  \quad (\text{by the induction hypothesis}).
\]

\smallskip

\emph{Case 2: $v$ is active.}
Here $|T^{(i)}_{v,\mathrm{pruned}}| \le \sqrt{B}$, and every attached subtree $T^{(i)}_{u_j,\mathrm{pruned}}$ also has size $\le \sqrt{B}$.  Since there can be at most $\sqrt{B}$ such attached subtrees, we get
\[
  \bigl|T^{(i+1)}_v\bigr|
  \;\le\;
  \bigl|T^{(i)}_{v,\mathrm{pruned}}\bigr|
  \cdot
  \max(1,\max_j \bigl|T^{(i)}_{u_j,\mathrm{pruned}}\bigr| )\le \sqrt{B}\cdot\sqrt{B}
  = B.
\]
\end{proof}
}
\begin{claim}
\cref{alg:exponentiatepluslocalprune} can be implemented in $O(s)$ MPC rounds with $O(n^\delta + B)$ words of local memory and $O(nB+m)$ words of global memory, where $n$ is an upper bound on the number of vertices, and $m$ is an upper bound on the number of edges of the graph $G$.
\label{claim:MPC_implementation}
\end{claim}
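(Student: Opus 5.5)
We implement \cref{alg:exponentiatepluslocalprune} with one machine, or a contiguous group of machines, per vertex $v$. That machine stores $T^{(i)}_v$ together with $map^{(i)}_v$ as a list of (tree-node id, parent id, image in $V(G)$) triples. By \cref{claim:tree_sizes} this list has $O(B)$ words. We keep the invariant that at the start of every iteration $i$, the machine responsible for $v$ holds $T^{(i-1)}_v$, $map^{(i-1)}_v$ and $v$'s activity bit. The plan is to show that each iteration of the main loop costs $O(1)$ MPC rounds, and that initialization also costs $O(1)$ rounds. This gives $O(s)$ rounds in total. The global memory bound is immediate: $n$ trees of size $\le B$ plus the edge list give $O(nB+m)$ words.

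\emph{Initialization.} By sorting the edge list by endpoint, which takes $O(1)$ rounds with $n^\delta$ local memory using the standard MPC sorting/aggregation primitives, each vertex learns $|N_G(v)|$. If $|N_G(v)|<B$, its neighbor list of size $<B$ is routed to $v$'s machine, which builds the star $T^{(0)}_v$. Otherwise $v$ builds a single node and marks itself inactive.

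\emph{Prune step.} LocalPrune and the size test against $\sqrt B$ are purely local computations on the stored tree, so they need no communication. Each vertex then publishes its activity bit and, if it is active, its pruned tree, which has size $\le\sqrt B$.

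\emph{Attachment step.} An active $v$ has at most $|T^{(i-1)}_{v,\mathrm{pruned}}|\le\sqrt B$ leaves $x_j$ at depth $2^{i-1}$. It issues one request for each, namely the pair $(v,u_j)$ with $u_j = map(x_j)$. We serve all requests by sorting them by $u_j$ and placing them next to the (activity bit, pruned tree) record of $u_j$. Then $u_j$'s tree must be copied to every requester. Each copy has at most $\sqrt B$ words, and the total volume delivered to $v$ is at most $\sqrt B\cdot\sqrt B = B$. Requests to inactive $u_j$ get only the bit back, and no attachment happens for them. Once $v$ has received the trees, it performs the attachment of Def.~\ref{def:treeAttachement} locally and relabels the tree-node ids.

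\emph{Main obstacle.} The obstacle is the fan-out: a single popular vertex $u$ may be requested by up to $n$ trees, so its machine cannot send all copies by itself. The volume is not the problem, since the total number of words sent is $\sum_v B = O(nB)$, which fits in global memory. The problem is routing it within $O(1)$ rounds. I would handle this with the standard broadcast/duplication primitive. After sorting, the requesters of $u$ form a contiguous range. We replicate $u$'s tree along this range with a $n^{\Omega(\delta)}$-ary broadcast tree, or equivalently a segmented prefix copy, which takes $O(1/\delta)=O(1)$ rounds. Finally each copy is routed to its requester $v$.

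The memory per machine stays $O(n^\delta+B)$ throughout, because no machine ever handles more than one vertex's $O(B)$-size tree plus $O(n^\delta)$ words of sorting data.
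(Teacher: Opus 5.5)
Your proposal is correct and follows the paper's argument. The paper simply combines the size bound of \cref{claim:tree_sizes} with standard MPC primitives (sorting, aggregation, and broadcast-tree duplication for high fan-out), and you spell those primitives out explicitly. One bookkeeping fix: under the paper's definition of global memory as $M\cdot S$, ``one machine per vertex'' costs $n\cdot(n^\delta+B)$ words, so instead pack many trees onto each machine; this is harmless because every tree has at most $B$ nodes and still lies entirely on one machine for the local LocalPrune and attachment computations.
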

\begin{proof}
This claim directly follows from \cref{claim:tree_sizes} together with a straightforward implementation using standard MPC primitives developed in previous works, e.g. [\cite{andoni2018parallelgraphconnectivitylog}, Section E] and the references therein.
\end{proof}

\shortOnly{The claim below follows by a simple induction. The full proof is deferred to the full version of the paper.}
\begin{claim}
\label{claim:bound_on_missing}
At iteration $i$, for every node $x \in T_v^{(i)}$ with $dist_{T_v^{(i)}}(r,x) < 2^i$ that maps to an active node, we have

\[|Missing_{G,T^{(i)}_v, map^{(i)}_v}(x)| \leq ik.\]

\end{claim}
\fullOnly{
\begin{proof}

\textbf{Base Case ($i = 0$):}

When $i = 0$, the root is the only node of distance strictly less than $2^{0} = 1$ from the root. If the root maps to an active node, then since pruning has not yet occurred, the root node has all its neighbors mapped in the initial tree $T^{(0)}_v$.  
  Therefore, the number of missing neighbors for the root node is 
    \[
      |\mathrm{Missing}_{G,T_v^{(0)}, map^{(0)}_v}(r)| = 0 \leq 0 \cdot k.
    \]
  
Thus, the base case holds.

\textbf{Inductive Step:}

Consider a node $x$ at a distance $\mathrm{dist}_{T^{(i+1)}_v}(r, x) < 2^{i+1}$ from the root $r$ in the tree $T^{(i+1)}_v$ that maps to an active node. We have to show that 

\[|Missing_{G,T^{(i+1)}_v, map^{(i+1)}_v}(x)| \leq (i+1)k.\]

There are two possibilities based on the distance of $x$ to the root $r$ in the tree after the $(i+1)$-th iteration:

\textbf{Case 1: $\mathrm{dist}_{T^{(i+1)}_v}(r, x) < 2^{i}$}

   Since $\mathrm{dist}_{T^{(i+1)}_v}(r, x) < 2^{i}$, by the induction hypothesis, node $x$ has at most $i \cdot k$ missing neighbors in $T^{(i)}_v$.
   During the $(i+1)$-th iteration, the \textsc{LocalPrune} step may remove up to $k$ subtrees from each node, potentially increasing the number of missing neighbors for $x$ by at most $k$ (see \cref{claim:pruned_missing}).
   Therefore, in $T^{(i+1)}_v$, node $x$ satisfies:
     \[
       |Missing_{G,T^{(i+1)}_v, map^{(i+1)}_v}(x)| = |Missing_{G, T^{(i)}_{v, pruned},map^{(i)}_{v, pruned}}|  \leq |Missing_{G,T^{(i)}_v, map^{(i)}_v}(x)| + k \leq (i+1) \cdot k.
     \]

\textbf{Case 2: $2^{i} \leq \mathrm{dist}_{T^{(i+1)}_v}(r, x) < 2^{i+1}$}
Node $x$ lies in a subtree that was attached to a leaf at distance exactly $2^{i}$ from the root during the $(i+1)$-th iteration. Let $r'$ denote the root of that subtree and $u' \gets map^{(i+1)}_v(r')$ the node that $r'$ maps to. It holds that 

\[dist_{T^{(i)}_{u'}}(r',x) = dist_{T^
{(i+1)}_v}(r,x) - dist_{T^
{(i+1)}_v}(r,r') = dist_{T^
{(i+1)}_v}(r,x) - 2^i < 2^{i+1}- 2^{i} = 2^i.\]

As $dist_{T^{(i)}_{u'}}(r',x) < 2^i$ and $x$ maps to an active node, we can apply the induction hypothesis to conclude that

\[|Missing_{G,T^{(i)}_{u'}, map^{(i)}_{u'}}(x)| \leq i \cdot k.\]

 Hence, we can use \cref{claim:pruned_missing} to conclude that

 \begin{align*}
  |Missing_{G,T^{(i+1)}_v, map^{(i+1)}_v}(x)| 
  &= |Missing_{G,T^{(i)}_{u',pruned}, map^{(i)}_{u',pruned}}(x)|\\ 
  &\leq |Missing_{G,T^{(i)}_{u'}, map^{(i)}_{u'}}(x)| + k \\
  &\leq (i+1)k.
 \end{align*}

\end{proof}
}
\begin{lemma}
\label{lem:missing_upper_bound}
Let $G$ be a graph, $d, L \in \mathbb{N}$ and $\ell_G \colon V(G) \mapsto [L] \cup \{\infty\}$ such that for every $v \in V(G)$ with $\ell_G(v) \neq \infty$, $|\{u \in N_G(v) \colon \ell_G(u) \geq \ell_G(v)\}| \leq d$.

Furthermore, let $B,k,s$ be three non-negative integers satisfying $k \geq d$ and $s > \log_2(L)$.

Let $v \in V(G)$ with $\ell_G(v) \neq \infty$ and
$NumPathsIn_{G,\ell_G}(v) \leq \sqrt{B}.$

Let $T_v^{(s)}$ and $map_v^{(s)}$  be the rooted tree and mapping that one obtains by calling

$ExponentiateAndLocalPrune(G, B, k, s)$.

Then, for every $x \in V(T_v^{(s)})$ that is strictly monotonically reachable with respect to $\ell_G$ (see \cref{def:strictly monotonic reachable}), it holds that 
\[|Missing_{G,T_v^{(s)}, map_v^{(s)}}(x)| \leq s \cdot k.\]
\end{lemma}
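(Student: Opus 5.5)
The plan is to reduce the statement to \cref{claim:bound_on_missing}, which already gives the bound $ik \le sk$ for tree nodes at depth $<2^i$ that map to \emph{active} vertices. So two things must be shown for every strictly monotonically reachable $x\in V(T_v^{(s)})$: (a) $dist_{T_v^{(s)}}(r,x) < 2^s$, and (b) $map_v^{(s)}(x)$ is active at the end of iteration $s$. We may first dispose of the trivial case that $v$ itself became inactive: then \cref{claim:bound_on_missing} does not apply directly, so we must show this never happens under the hypothesis $NumPathsIn_{G,\ell_G}(v)\le\sqrt B$.

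Step (a) is immediate. Along the tree path from $x$ to $r$ the layers strictly increase in $[L]$, so the path has at most $L$ vertices. Hence the depth of $x$ is at most $L-1 < 2^s$, using $s>\log_2 L$.

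Step (b) is the core. Let $u=map(x)$ and $y$ be any node on the path from $x$ to $r$. Since the path is strictly increasing, every vertex $u'$ on it has $\ell_G(u')\le \ell_G(v)$ finite. Moreover, $NumPathsIn(u')\le NumPathsIn(v)\le \sqrt B$, because every strictly increasing path ending at $u'$ extends uniquely along the monotone path to $v$, giving an injection. Now argue that any vertex $w$ with $\ell_G(w)\neq\infty$ and $NumPathsIn(w)\le\sqrt B$ stays active throughout. Initially, $\deg(w)<B$ is needed. Here $w$ has at most $d$ neighbors of layer $\ge \ell(w)$, and each lower neighbor contributes a distinct increasing path into $w$, so $\deg(w)\le d+\sqrt B$. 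This is below $B$ for the parameter regime, and I expect to assume $B$ large relative to $d$, e.g.\ $\sqrt B\ge 2d$; if the paper's parameters do not give it, this is a point to check. In each iteration, the pruned tree $T^{(i-1)}_{w,pruned}$ has size at most $NumPathsIn(w)\le\sqrt B$ by \cref{lem:local-prune-size} (which needs $k\ge d$, the valid-mapping claim, and $\ell_G(w)\ne\infty$). Hence the test $|V(T_{w,pruned})|>\sqrt B$ never fires, and $w$ remains active for all $i\le s$.

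Applying (b) to $u$, and to $v$ itself, shows both are active, so \cref{claim:bound_on_missing} with $i=s$ yields $|Missing(x)|\le sk$. The main obstacle is the initialization step in (b), namely ensuring the degree threshold $B$ does not deactivate a low-path-count vertex; the rest is bookkeeping. A secondary subtlety is that activeness must refer to the status at iteration $s$, which holds because activity is monotone (never re-activated) and we showed it is never lost.
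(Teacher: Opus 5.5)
Your proposal follows the paper's proof step for step. Depth is at most $L-1<2^s$, and the injection bounds $\mathrm{NumPathsIn}_{G,\ell_G}$ by $\mathrm{NumPathsIn}_{G,\ell_G}(v)$ on every vertex of the root path. \cref{lem:local-prune-size} then shows the pruning test never deactivates such a vertex, and \cref{claim:bound_on_missing} finishes. Your remark that $v$ itself must stay active for that claim to apply is a correct addition that the paper leaves implicit.

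The point you leave open is a genuine gap, and the paper's proof has the same hole. It infers that $map(x)$ is active until the end from \cref{lem:local-prune-size} alone, which controls only the pruning test, not the initialization test $|N_G(w)|\ge B$. Your assumption $\sqrt{B}\ge 2d$ is not a hypothesis of the lemma. It does hold in the paper's only application, where the out-degree is $k$ and $B\ge k^{100}$. Without it, step (b) is false. Take $d=k=3$, $B=4$, $L=s=2$, and $v$ with $\ell_G(v)=2$, one neighbor of layer $1$ whose only neighbor is $v$, and three neighbors of layer $\infty$. Then $\mathrm{NumPathsIn}_{G,\ell_G}(v)=2=\sqrt{B}$ but $|N_G(v)|=4\ge B$, so $v$ is inactive from the start.

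The lemma still holds as stated, and the missing case can be handled directly.

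\textbf{Initially inactive vertices only appear as childless nodes.} A vertex $w$ that is inactive from initialization has a one-node tree and is never attached. By induction over the iterations, every tree node mapping to $w$ has no children, in every tree. Hence if some node on the root path of $x$ maps to such a $w$, that node is $x$ itself, and $Missing(x)=N_G(w)$. This also covers $v$: if $v$ is such a vertex, $T_v^{(s)}$ is just the root.

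\textbf{Bounding the degree of such a $w$.} By your injection argument, $\ell_G(w)<\infty$ and $\mathrm{NumPathsIn}_{G,\ell_G}(w)\le\sqrt{B}$. Each neighbor of $w$ of smaller layer yields a distinct non-trivial strictly increasing path into $w$. Hence $B\le|N_G(w)|\le d+\sqrt{B}-1$, i.e.\ $\sqrt{B}(\sqrt{B}-1)\le d-1$. This forces $\sqrt{B}<d+1$, and hence $|N_G(w)|<2d\le sk$ when $L\ge 2$, since then $s\ge 2$. If $L=1$, every neighbor of $w$ has layer $\ge\ell_G(w)$, so $|N_G(w)|\le d\le sk$.

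In all remaining cases, $v$ and $map(x)$ pass the initial degree test, and your argument goes through unchanged.
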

\begin{proof}
Consider any node $x \in T_v^{(s)}$ that is strictly monotonically reachable with respect to $\ell_G$. As $\ell_G(v) \leq L$ and $x$ is strictly monotonically reachable, it follows that the distance of $x$ to the root is at most $L - 1 < 2^{s}$. Moreover, $x$ being monotonic reachable also implies that 
\[NumPathsIn_{G,\ell_G}(map(x)) \leq NumPathsIn_{G,\ell_G}(v) \leq \sqrt{B}.\]
Hence, as $k \geq d$, \cref{lem:local-prune-size} implies that $V(T^{(i)}_{map(x), pruned}) \leq NumPathsIn_{G,\ell_G}(map(x)) \leq \sqrt{B}$ for every $i \in \{0,1,\ldots,s-1\}$ and therefore $map(x)$ is active until the end. Hence, $x$ has a distance of less than $2^s$ to the root and it maps to an active node. Therefore, we can use \cref{claim:bound_on_missing} to conclude that 

\[|Missing_{G,T^{(s)}_v, map^{(s)}_v}(x)| \leq s \cdot k.\]
\end{proof}

\subsection{Partial Layer Assignment}

After executing the graph exponentation algorithm (\cref{alg:exponentiatepluslocalprune}), we have one rooted tree for each node $v$. For each tree, we use \cref{alg:partiallayerassignmentree} to assign a layer to a subset of the vertices of the tree using a simple peeling process, where we later set $a = O(\lambda(G)\log \log n)$. \cref{alg:partiallayerassignmentree} will be executed on a single machine and does not require any communication (assuming that each tree node knows $|Missing(x)|$).

\fullOnly{\begin{algorithm}[H]}
\shortOnly{\begin{algorithm}}
\SetAlgoLined
\KwIn{Graph $G$, rooted tree $T$, valid mapping $map \colon V(T) \mapsto V(G)$, $a \in \mathbb{N}$, $L  \in \mathbb{N}$}
\KwOut{A layer assignment $\ell_T : V(T) \mapsto [L] \cup \{\infty\}$}
For every $x \in V(T)$, let $Missing(x)
 := N_G(map(x)) \setminus \{map(c) \colon \text{$c \in children_T(x)$}\}$ 
$V_{\geq 1} \gets V(T)$

\For{$j= 1, 2, \ldots, L$}{

$V_j \gets \left\{x \in V_{\geq j} \colon |children_T(x) \cap V_{\geq j} | + |Missing(x)| \leq a\right\}$ 

For every $x \in V_j, \ell_T(x) \gets j$

$V_{\geq j+1} \gets V_{\geq j} \setminus V_j$
}
For every $x \in V_{\geq L + 1}, \ell_T(x) \gets \infty$ 

\Return $\ell_T$
\caption{PartialLayerAssignmentTree}
\label{alg:partiallayerassignmentree}
\end{algorithm}

\begin{lemma}
\label{lem:layer_assignment_tree}
Let $G$ be a graph, $d, L \in \mathbb{N}$ and $\ell_G \colon V(G) \mapsto [L] \cup \infty$ such that for every $v \in V(G)$ with $\ell_G(v) \neq \infty$, $|\{u \in N_G(v) \colon \ell_G(u) \geq \ell_G(v)\}| \leq d$. 
Moreover, let $T$ be a rooted tree, $map \colon V(T) \mapsto V(G)$ be a valid mapping and $missing \in \mathbb{N}_0$ be a non-negative integer such that the following holds: for every $x \in V(T)$ that is strictly monotonically reachable with respect to $\ell_G$ (see \cref{def:strictly monotonic reachable}), it holds that $|Missing(x)| \leq missing$ (where $Missing(x) := N_G(map(x)) \setminus \{map(c) \colon \text{$c \in children_T(x)$}$).
Consider $a \in \mathbb{N}$ with $a \geq d + missing$ and let $\ell_T \colon V(T) \mapsto [L] \cup \{\infty\}$ be the \textit{layer assignment} one obtains from calling $PartialLayerAssignmentTree(G, T, map, a, L)$. Then, for every $x \in V(T)$ that is strictly monotonically reachable with respect to $\ell_G$, we have $\ell_T(x) \leq \ell_G(map(x))$.
\end{lemma}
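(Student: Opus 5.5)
We prove the statement by strong induction on $j := \ell_G(map(x))$. The claim to establish is: for every $j \in [L]$ and every strictly monotonically reachable $x \in V(T)$ with $\ell_G(map(x)) = j$, we have $x \in V_{j'}$ for some $j' \le j$, i.e.\ $\ell_T(x) \le j$. Note that strict monotone reachability forces $\ell_G(map(x)) < \infty$; the last entry in the chain is $\ell_G(map(r))$, so every entry is finite. Hence $j \in [L]$ is well defined.

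Fix such an $x$ with $\ell_G(map(x)) = j$ and suppose, for contradiction, that $x \notin V_1 \cup \dots \cup V_{j-1}$. Then $x \in V_{\ge j}$, and it suffices to show
\[
|children_T(x) \cap V_{\ge j}| + |Missing(x)| \le a,
\]
since this puts $x$ into $V_j$. By hypothesis $|Missing(x)| \le missing$, so it remains to show $|children_T(x) \cap V_{\ge j}| \le d$.

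Split the children $c$ of $x$ by the layer of $map(c)$.

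\emph{Children with $\ell_G(map(c)) \ge j$.} Because $map$ is valid, distinct children map to distinct neighbors of $map(x)$. The out-degree condition at $map(x)$ therefore bounds the number of such children by $d$.

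\emph{Children with $\ell_G(map(c)) < j$.} Such a $c$ is itself strictly monotonically reachable: its path to the root is $c$ followed by $x$'s path, and $\ell_G(map(c)) < \ell_G(map(x))$ extends the strict chain. The induction hypothesis gives $\ell_T(c) \le \ell_G(map(c)) \le j-1$, so $c \notin V_{\ge j}$ and $c$ does not contribute.

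Combining the two cases, $|children_T(x) \cap V_{\ge j}| \le d$, and hence the left-hand side is at most $d + missing \le a$. So $x \in V_j$ and $\ell_T(x) = j$. The base case $j=1$ is the same argument with the second group of children empty.

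The main point needing care is that strict monotone reachability is inherited by lower-layer children, which is what licenses the induction hypothesis. The rest is bookkeeping with the sets $V_{\ge j}$, which shrink monotonically, so "assigned at some step $\le j-1$" is equivalent to "not in $V_{\ge j}$".
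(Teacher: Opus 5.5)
Your proof is correct and follows the paper's argument essentially step for step. Both proceed by induction on $\ell_G(map(x))$: children mapping to strictly lower layers are themselves strictly monotonically reachable, so the inductive hypothesis has already removed them from $V_{\geq j}$, while the remaining children number at most $d$ by validity of $map$ and the out-degree condition at $map(x)$. One small nit: you say every entry of the chain is finite, but the definition does not force $\ell_G(map(r)) < \infty$. In the only affected case, $x=r$ with $\ell_G(map(r))=\infty$, the conclusion $\ell_T(x)\le\infty$ is trivial, so nothing breaks.
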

\shortOnly{The proof of \cref{lem:layer_assignment_tree} is deferred to the full version of the paper.}
\fullOnly{
\begin{proof}
We prove the following statement by induction on $\ell_G(\mathrm{map}(x))$:
\[
  \text{For every } x \in V(T) 
    \text{ that is strictly monotonically reachable with respect to $\ell_G$}, 
    \quad 
  \ell_T(x) \leq \ell_G\bigl(\mathrm{map}(x)\bigr).
\]

\paragraph{Base Case: $\ell_G(\mathrm{map}(x)) = 1$.}
Let $x \in V(T)$ be strictly monotonically reachable with respect to $\ell_G$ and $\ell_G(\mathrm{map}(x)) = 1$. 
We must show that $\ell_T(x) = 1$.  
By definition of $V_1$ in \textsc{PartialLayerAssignment}, it suffices to prove
\[
  \bigl|\mathrm{children}_T(x)\bigr|
  \;+\;
  \bigl|\texttt{Missing}(x)\bigr|
  \;\le\;
  a,
\]
where 
\[
  \texttt{Missing}(x) 
    := 
  N_G\bigl(\mathrm{map}(x)\bigr) \setminus 
  \{\mathrm{map}(c) : c \in \mathrm{children}_T(x)\}.
\]
Since $x$ is striclty monotonic reachable, the lemma’s assumption implies 
\(\bigl|\texttt{Missing}(x)\bigr|\leq \textit{missing}\).  

Because $\mathrm{map}$ is a valid mapping, it holds that

\[|children_T(x)| \leq |N_G(map(x))|.\]

Also, from $\ell_G(\mathrm{map}(x))=1$ and the property of $\ell_G$ we have:
\[\bigl|N_G(map(x))\bigr| =
  \bigl|\{\,u \in N_G(\mathrm{map}(x)) : \ell_G(u)\,\ge\,1\}\bigr|
  \;\le\;
  d.
\]
Hence $\bigl|\mathrm{children}_T(x)\bigr| 
\leq 
\bigl|\{\,u \in N_G(\mathrm{map}(x)) : \ell_G(u)\ge 1\}\bigr|
\leq d.$  

Combining these two bounds and using $a \geq \textit{outdegree} + \textit{missing}$, we get
\[
  \bigl|\mathrm{children}_T(x)\bigr|
  \;+\;
  \bigl|\texttt{Missing}(x)\bigr|
  \;\le\;
  d + \textit{missing}
  \;\le\;
  a.
\]
Thus $x \in V_1$, which means $\ell_T(x) = 1$ as required.

\paragraph{Inductive Step.}
Let $j \in [L-1]$. Suppose that for all $y \in V(T)$ that are strictly monotonically reachable and 
\(\ell_G(\mathrm{map}(y)) \le j\), we already know 
\(\ell_T(y)\le \ell_G(\mathrm{map}(y))\).  

Now consider a node $x \in V(T)$ that is 
\[\text{strictly monotonically reachable and }
 \text{and}\quad
  \ell_G\bigl(\mathrm{map}(x)\bigr) \;=\; j+1.
\]
We want to show $\ell_T(x)\le j+1$.  
From the definition of $V_{j+1}$, it is enough to check:
\[
  \bigl|\mathrm{children}_T(x)\,\cap\,V_{\ge j+1}\bigr|
  \;+\;
  \bigl|\texttt{Missing}(x)\bigr|
  \;\le\;
  a.
\]
First, $x$ being strictly monotonically reachable implies $\bigl|\texttt{Missing}(x)\bigr|\le \textit{missing}$.  
Hence, by $a\ge d + \textit{missing}$, it suffices to show
\[
  \bigl|\mathrm{children}_T(x)\,\cap\,V_{\ge j+1}\bigr|
  \;\le\;
  d.
\]

Recall $V_{\ge j+1} = V(T)\setminus \bigl(V_1 \cup \cdots \cup V_j\bigr)$, so if $c \in \mathrm{children}_T(x) \cap V_{\ge j+1}$, then $\ell_T(c)\ge j+1$.  
By validity of the mapping, $\mathrm{map}(c)\in N_G(\mathrm{map}(x))$.  
Moreover, if $\ell_T(c)\ge j+1$ but $\ell_G(\mathrm{map}(c))$ were $< j+1$, then $c$ would be strictly monotonic reachable and the induction hypothesis would force $\ell_T(c)\le j$, a contradiction.  
Thus we must have $\ell_G(\mathrm{map}(c)) \ge j+1$.  
Hence each $c \in \mathrm{children}_T(x)\cap V_{\ge j+1}$ corresponds to a neighbor of $\mathrm{map}(x)$ in $G$ whose $\ell_G$-value is at least $j+1$.  

Because $\ell_G(\mathrm{map}(x))=j+1$ and for all $v\in V(G)$ with $\ell_G(v)\neq \infty$, 
\[
  \bigl|\{\,u\in N_G(\mathrm{map}(x)) : \ell_G(u)\ge \ell_G(\mathrm{map}(x))\}\bigr|
  \;\le\;
  d,
\]
it follows immediately that
\[
  \bigl|\mathrm{children}_T(x) \cap V_{\ge j+1}\bigr|
  \;\le\;
  d.
\]
Thus
\[
  \bigl|\mathrm{children}_T(x)\cap V_{\ge j+1}\bigr|
  \;+\;
  \bigl|\texttt{Missing}(x)\bigr|
  \;\le\;
  d + \textit{missing}
  \;\le\;
  a.
\]
Therefore if $x \in V_{\geq j+1}$, then $x$ is placed in $V_{j+1}$ by the algorithm, so $\ell_T(x)=j+1$. This completes the inductive step.
\end{proof}
}
\begin{lemma}
\label{lem:partiallayerassignmenttree}
Let $G$ be a graph, $d, L \in \mathbb{N}$ and $\ell_G \colon V(G) \mapsto [L] \cup \{\infty\}$ such that for every $v \in V(G)$ with $\ell_G(v) \neq \infty$, $|\{u \in N_G(v) \colon \ell_G(u) \geq \ell_G(v)\}| \leq d$.

Furthermore, let $B,k,s \in \mathbb{N}_0$ be parameters that satisfy $k \geq d$ and $s > \log_2(L)$.

Let $v \in V(G)$ with $\ell_G(v) \neq \infty$ and
$NumPathsIn_{G,\ell_G}(v) \leq \sqrt{B}.$

Let $T_v^{(s)}$ and $map_v^{(s)}$ be the rooted tree and mapping that one obtains by calling

$ExponentiateAndLocalPrune(G, B, k, s)$ and let $\ell_v \colon V(T_v^{(s)}) \mapsto  [L] \cup \{\infty\}$ be the layer assignment one obtains from calling $PartialLayerAssignmentTree(G, T_v^{(s)}, map_v^{(s)}, a:=k \cdot(s+1), L)$. Then, if $r_v$ denotes the root of $T_v^{(s)}$, it holds that $\ell_v(r_v) \leq \ell_G(v)$.
\end{lemma}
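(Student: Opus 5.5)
The plan is to chain \cref{lem:missing_upper_bound} with \cref{lem:layer_assignment_tree}, applied to the tree $T := T_v^{(s)}$ with mapping $map := map_v^{(s)}$.

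\textbf{Step 1: bound the missing neighbors.} The hypotheses of the present lemma are exactly those of \cref{lem:missing_upper_bound}: the out-degree condition on $\ell_G$, $k \geq d$, $s > \log_2(L)$, $\ell_G(v)\neq\infty$, and $\mathrm{NumPathsIn}_{G,\ell_G}(v) \le \sqrt{B}$. That lemma therefore gives $|Missing_{G,T,map}(x)| \le s\cdot k$ for every $x \in V(T)$ that is strictly monotonically reachable with respect to $\ell_G$. We also know $map$ is valid, by the earlier claim on validity of $map^{(i)}_v$.

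\textbf{Step 2: invoke the peeling lemma.} Apply \cref{lem:layer_assignment_tree} with $missing := s\cdot k$ and $a := k(s+1)$. Its requirement is $a \ge d + missing$, which holds because $k(s+1) = sk + k \ge sk + d$, using $k \ge d$. The lemma then yields $\ell_v(x) \le \ell_G(map(x))$ for every strictly monotonically reachable $x$.

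\textbf{Step 3: specialize to the root.} The root $r_v$ is strictly monotonically reachable: its path to the root is the trivial one-vertex path $x_1 = r_v$, so the chain of strict inequalities in \cref{def:strictly monotonic reachable} is vacuous. Since $map(r_v) = v$ by construction in \cref{alg:exponentiatepluslocalprune} (the root is initialized to $v$, and neither pruning nor attachment at leaves changes it), we conclude $\ell_v(r_v) \le \ell_G(v)$.

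The only points needing care are these bookkeeping checks: that the root maps to $v$, that the trivial path counts as strictly monotone, and the arithmetic $a \ge d + sk$. No step appears to be a genuine obstacle, since all the substantive work is already done in \cref{lem:missing_upper_bound} and \cref{lem:layer_assignment_tree}.
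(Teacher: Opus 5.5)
Your proposal is correct and follows the paper's proof exactly: \cref{lem:missing_upper_bound} gives the $s\cdot k$ bound on missing neighbors, then \cref{lem:layer_assignment_tree} is applied with $missing := s\cdot k$ and $a := (s+1)k \ge d + sk$, and the result is specialized to the root. Your extra bookkeeping (that $map(r_v) = v$ and that the trivial path makes the root strictly monotonically reachable) is sound and only makes explicit what the paper leaves implicit.
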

\begin{proof}
\cref{lem:missing_upper_bound} gives that every $x \in V(T_v^{(s)})$ that is strictly monotonically reachable with respect to $\ell_G$ satisfies 

$|Missing_{G,T_v^{(s)}, map_v^{(s)}}(x)| \leq s \cdot k$. Thus, we can use \cref{lem:layer_assignment_tree} with $missing := s \cdot k$ and $a := (s+1) \cdot k$ to conclude that for every $x \in V(T_v^{(s)})$ that is strictly monotonically reachable with respect to $\ell_G$, we have $\ell_v(x) \leq \ell_G(map(x))$. As the root $r_v$ of $T_v^{(s)}$ is strictly monotonically reachable with respect to $\ell_G$, this implies that $\ell_v(r_v) \leq \ell_G(v)$.
\end{proof}

\begin{lemma}
\label{lem:upper_bound_out_degree}
Let $G$ be a graph, $T$ be a rooted tree, $map \colon V(T) \mapsto V(G)$ be a \emph{valid} mapping, and $a, L \in \mathbb{N}$. Let $\ell_T \colon V(T) \mapsto [L] \cup \{\infty\}$ be the layer assignment computed by

$PartialLayerAssignmentTree(G, T, map, a, L)$ and $\ell \colon V(G) \mapsto [L] \cup \{\infty\}$ be defined as $\ell(v) = \min \{\ell_T(x) \colon x \in V(T) \text{ and } map(x) = v\}$, with the cornercase definition of $\ell(v) = \infty$ if no node maps to $v$. Then, for every $v \in V(G)$ with $\ell(v) \neq \infty$, it holds that 

\[|\{u \in N_G(v) \colon \ell(u) \geq \ell(v)\}| \leq a.\]
\end{lemma}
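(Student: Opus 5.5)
Fix $v\in V(G)$ with $\ell(v)=j\neq\infty$, and pick a tree node $x\in V(T)$ with $map(x)=v$ and $\ell_T(x)=j$; such an $x$ exists because $\ell(v)$ is a minimum over a finite nonempty set. Since $\ell_T(x)=j$, the node $x$ entered $V_j$ in \cref{alg:partiallayerassignmentree}. So $x\in V_{\geq j}$ and
\[
\bigl|children_T(x)\cap V_{\geq j}\bigr| + |Missing(x)| \le a .
\]
The plan is to charge every neighbor $u\in N_G(v)$ with $\ell(u)\ge j$ either to a child of $x$ lying in $V_{\ge j}$ or to an element of $Missing(x)$, injectively. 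That charging immediately gives the bound $a$.

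Let $u\in N_G(v)$ with $\ell(u)\ge j$. There are two cases.
\begin{itemize}
\item If $u\in Missing(x)$, charge $u$ to itself.
\item Otherwise $u\notin Missing(x)$. Then, by the definition of $Missing(x)$, there is a child $c$ of $x$ with $map(c)=u$. By condition 2 of a valid mapping this child is unique, so the map $u\mapsto c$ is injective. Since $\ell(u)$ is the minimum of $\ell_T$ over all preimages of $u$, we get $\ell_T(c)\ge \ell(u)\ge j$. A tree node with $\ell_T(c)\ge j$ was not placed in $V_1,\dots,V_{j-1}$; this includes the case $\ell_T(c)=\infty$, which means $c\in V_{\ge L+1}$. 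Hence $c\in V_{\ge j}$.
\end{itemize}

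The sets $Missing(x)$ and $\{map(c): c\in children_T(x)\}$ are disjoint by definition. So the two kinds of charges never collide, and
\[
|\{u\in N_G(v):\ell(u)\ge \ell(v)\}| \le |Missing(x)| + |children_T(x)\cap V_{\ge j}| \le a .
\]

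The only mildly delicate step is the monotonicity fact that $\ell_T(c)\ge j$ implies $c\in V_{\ge j}$, including the $\infty$ case. It follows directly from the nested sets $V_{\ge 1}\supseteq V_{\ge 2}\supseteq\cdots$ in the algorithm, where each $V_{\ge j+1}=V_{\ge j}\setminus V_j$. No other obstacle is expected. The argument uses neither the layer assignment $\ell_G$ nor any property of the tree beyond validity of $map$.
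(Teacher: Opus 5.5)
Your proof is correct and matches the paper's argument essentially step for step: you pick $x$ with $map(x)=v$ and $\ell_T(x)=j$, use the $V_j$ membership inequality, and send each neighbor $u$ with $\ell(u)\ge j$ either to a child $c$ with $\ell_T(c)\ge \ell(u)\ge j$ (so $c\in V_{\ge j}$) or to $Missing(x)$. One small remark: injectivity of $u\mapsto c$ already follows from $map(c)=u$, so you don't need condition 2 of validity for it.
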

\shortOnly{
The proof of \cref{lem:upper_bound_out_degree} can be found in the full version of the paper.
}
\fullOnly{
\begin{proof}
Let $v \in V(G)$ and $j \in \mathbb{N}$ be arbitrary with $\ell(v) = j$. By definition of $\ell(v)$, there exists some node $x \in V(T)$ with $map(x) = v$ and $\ell_T(x) = j$. In particular, it holds that $x \in V_j$ and therefore

\begin{align*}
|children_T(x) \cap V_{\geq j}| + |Missing(x)| \leq a.
\end{align*}

Let $u \in N_G(v)$ with $\ell(u) \geq j$. First, consider the case that there exists some child $c \in children_T(x)$ with $map(c) = u$. Then, $\ell_T(c) \geq \ell(u) \geq j$ and therefore $c \in children_T(x) \cap V_{\geq j}$.
Otherwise, if there is no child $c \in children_T(x)$ with $map(c) = u$, then $u \in Missing(x)$. Therefore,

\begin{align*}
|\{u \in N_G(v) \colon \ell(u) \geq j\}| \leq |children_T(x) \cap V_{\geq j}| + |Missing(x)| \leq a.
\end{align*}
\end{proof}
}

\fullOnly{\begin{algorithm}[H]}
\shortOnly{\begin{algorithm}}
\SetAlgoLined
\KwIn{Graph $G$, non-negative integers $ B, k, L, s \in \mathbb{N}$}
\KwOut{A layer assignment $\ell : V(G) \mapsto [L] \cup \{\infty\}$}
$(T_v^{(s)}, map_v^{(s)})_{v \in V(G)} \gets ExponentiateAndLocalPrune(G, B, k, s)$ \\
\For{every $v \in V(G)$}{
$\ell_v \colon V(T_v^{(s)}) \mapsto [L] \cup \{\infty\} \gets PartialLayerAssignmentTree(G, T_v^{(s)}, map_v^{(s)}, (s+1) \cdot k, L)$ 

}
$\forall u \in V(G)$, $\ell(u) \gets \min_{v \in V(G), x \in V(T^{(s)}_v), map_v^{(s)}(x) = u}\ell_v(x)$ 

\Return $\ell$
\caption{PartialLayerAssignment}
\label{alg:partial_layer_assignment}
\end{algorithm}
\begin{claim}
\cref{alg:partial_layer_assignment} can be implemented in $O(s)$ MPC rounds with $O(n^\delta + B)$ words of local memory and $O(nB+m)$ words of global memory, where $n$ is an upper bound on the number of vertices and $m$ is an upper bound on the number of edges of the graph $G$.
\label{claim:MPC_implementation_PLA}
\end{claim}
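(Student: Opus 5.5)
The plan is to implement \cref{alg:partial_layer_assignment} in three stages, using \cref{claim:MPC_implementation} for the first stage and standard MPC primitives (sorting, prefix sums, aggregation) for the rest.

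\textbf{Stage 1.} Invoke \cref{claim:MPC_implementation}. This runs $ExponentiateAndLocalPrune(G,B,k,s)$ in $O(s)$ rounds with $O(n^\delta+B)$ local and $O(nB+m)$ global memory. At the end, each tree $T_v^{(s)}$ has at most $B$ nodes by \cref{claim:tree_sizes}. We then reorganize the data so that each tree is stored entirely on one machine, which is possible since $B$ words fit into $O(n^\delta+B)$ local memory.

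\textbf{Stage 2.} \cref{alg:partiallayerassignmentree} must run locally on each tree, and it needs $|Missing(x)|$ for every tree node $x$. We have $|Missing(x)| = \deg_G(map(x)) - |children_T(x)|$, because a valid mapping sends distinct children to distinct neighbours. The child count is known locally. The degree $\deg_G(u)$ of every vertex $u$ is computed in $O(1)$ rounds by sorting the edges. To deliver these degrees, we create one request tuple $(map(x), x)$ for each tree node, at most $nB$ tuples in total. We sort the requests together with the degree records by vertex identifier, propagate each degree across its block of requests, and route the answers back to the machines holding the trees. Each of these steps uses $O(1)$ rounds with sorting in the strongly sublinear regime and stays within $O(nB+m)$ global memory. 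The machine holding $T_v^{(s)}$ then runs the $L$-iteration peeling locally and computes $\ell_v$, which costs no communication.

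\textbf{Stage 3.} For the final minimum $\ell(u)=\min \ell_v(x)$ over all tree nodes with $map_v^{(s)}(x)=u$, every tree node emits the pair $(map(x),\ell_v(x))$. We then sort these $O(nB)$ pairs by vertex and take the minimum within each group. This is a standard group-by aggregation in $O(1)$ rounds, handling groups that span several machines via a constant-depth aggregation tree. Vertices receiving no pair get $\infty$.

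Summing the stages gives $O(s)+O(1)$ rounds. The only step needing care is Stage 2's degree lookup: many tree nodes across many trees can request the degree of the same high-degree vertex, so naive point-to-point messaging could overload one machine. Replacing direct messages with the sort-and-propagate join avoids this, and I expect that to be the main (though routine) point to argue.
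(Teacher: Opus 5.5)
Your proposal is correct and follows the paper's approach: it runs \cref{claim:MPC_implementation} for the exponentiation phase and then finishes with constant-round sorting and aggregation primitives, which is exactly what the paper's one-line proof appeals to. You add detail that the paper leaves implicit. You explicitly compute $|Missing(x)| = \deg_G(map(x)) - |children_T(x)|$ using validity of the mapping, and deliver the degrees through a sort-based join rather than point-to-point messages. You also carry out the final per-vertex minimum as a group-by aggregation.
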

\begin{proof}
This claim directly follows from \cref{claim:MPC_implementation} together with a straightforward implementation using standard MPC primitives developed in previous works, e.g. [\cite{andoni2018parallelgraphconnectivitylog}, Section E] and the references therein.
\end{proof}

\begin{claim}
\label{claim:out_degree}
Let $G$ be a graph, $B, k, L, s \in \mathbb{N}$ and $\ell \colon V(G) \mapsto [L] \cup \{\infty\}$ be the partial layer assignment computed by 

$PartialLayerAssignment(G, B, k, L,s)$. Then, for every $v \in V(G)$ with $\ell(v) \neq \infty$, it holds that
\[|\{u \in N_G(v) \colon \ell(u) \geq \ell(v)\}| \leq (s+1) \cdot k.\]
\end{claim}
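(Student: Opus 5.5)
The plan is to reduce the claim to \cref{lem:upper_bound_out_degree} and \cref{claim:min-of-pla}. The final assignment $\ell$ produced by \cref{alg:partial_layer_assignment} is a minimum, over all roots $v \in V(G)$, of per-tree assignments. Set $a := (s+1)\cdot k$. For each $v \in V(G)$ I define the graph-level function
\[
\ell^{(v)}(u) := \min\{\ell_v(x) \colon x \in V(T_v^{(s)}),\ map_v^{(s)}(x) = u\},
\]
with $\ell^{(v)}(u) = \infty$ if no node of $T_v^{(s)}$ maps to $u$. By construction, $\ell(u) = \min_{v \in V(G)} \ell^{(v)}(u)$ for every $u$.

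\textbf{Step 1.} Each $\ell^{(v)}$ is a partial layer assignment with out-degree $a$. By the earlier claim on validity, $map_v^{(s)}$ is a valid mapping. The function $\ell_v$ is exactly the output of $PartialLayerAssignmentTree(G, T_v^{(s)}, map_v^{(s)}, a, L)$. So \cref{lem:upper_bound_out_degree} applies directly and gives, for every $u$ with $\ell^{(v)}(u) \neq \infty$,
\[
|\{w \in N_G(u) \colon \ell^{(v)}(w) \geq \ell^{(v)}(u)\}| \leq a .
\]
This is precisely \cref{def:partial-layer-assignment} with out-degree $a$ and $L$ layers.

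\textbf{Step 2.} Combine over all roots using \cref{claim:min-of-pla}. The claim is stated for two assignments. I induct over an arbitrary enumeration $v_1, \dots, v_n$ of $V(G)$. The running minimum $\min_{i \le t} \ell^{(v_i)}$ is a partial layer assignment with out-degree $a$ for every $t$: this holds at $t=1$ by Step 1, and each inductive step is one application of \cref{claim:min-of-pla}. At $t = n$ the running minimum equals $\ell$. Hence for every $v$ with $\ell(v) \neq \infty$ we get $|\{u \in N_G(v) \colon \ell(u) \geq \ell(v)\}| \leq (s+1)\cdot k$, as claimed.

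There is no real obstacle; the only care needed is bookkeeping. One point is the corner case where no tree node maps to $u$: there the value is $\infty$, which is harmless because an $\infty$ entry never lowers a minimum. The other is that \cref{lem:upper_bound_out_degree} is applied with the specific value $a = (s+1)k$ passed in \cref{alg:partial_layer_assignment}. No assumption on the true layering $\ell_G$ or on $B$ is needed, which is why the claim holds for arbitrary parameters $B, k, L, s$.
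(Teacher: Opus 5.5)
Your proof is correct and follows essentially the same route as the paper: both apply \cref{lem:upper_bound_out_degree} to each per-tree assignment and use that the global $\ell$ is the pointwise minimum of these. The only difference is bookkeeping. The paper inlines the minimum argument by picking the tree $T_{v'}^{(s)}$ whose node witnesses $\ell(v)$ and using $\ell \le \widetilde{\ell}_{v'}$ pointwise. You instead reach the same conclusion by iterating \cref{claim:min-of-pla} over an enumeration of $V(G)$, which is equally valid.
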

\shortOnly{
A formal proof of this claim can be found in the full version of the paper.
}
\fullOnly{
\begin{proof}
Let \(v \in V(G)\) be arbitrary with \(\ell(v) \neq \infty\). By the definition of \(\ell\), there exists some vertex \(v' \in V(G)\) and a node 
$
x \in V(T_{v'}^{(s)})$ such that 
$
map_{v'}^{(s)}(x)=v$ and $\ell_{v'}(x)=\ell(v)$. Let
\[
\widetilde{\ell}_{v'}\colon V(G) \to [L] \cup \{\infty\}
\]
be defined by setting for every \(u\in V(G)\),
\[
\widetilde{\ell}_{v'}(u) \;=\; 
\begin{cases}
\min\{\,\ell_{v'}(y) \colon y \in V(T_{v'}^{(s)}) \text{ with } map_{v'}^{(s)}(y)=u\}, & \text{if such } y \text{ exists},\\[1mm]
\infty, & \text{otherwise}.
\end{cases}
\]
By the definition of the overall assignment \(\ell\) we have 
\[
\ell(u)=\min\{\,\ell_{w}(y) \colon w \in V(G),\; y \in V(T_{w}^{(s)}),\; map_{w}^{(s)}(y)=u\} \;\le\; \widetilde{\ell}_{v'}(u)
\]
for every \(u\in V(G)\). Hence, for any \(u\in N_G(v)\) satisfying
$\ell(u)\ge \ell(v)=\ell_{v'}(x)$,
we obtain
$
\widetilde{\ell}_{v'}(u) \,\ge\, \ell_{v'}(x).
$
Thus,
\[
\{\,u\in N_G(v):\,\ell(u)\ge \ell(v)\,\}\;\subseteq\;\{\,u\in N_G(v):\,\widetilde{\ell}_{v'}(u)\ge \ell_{v'}(x)\,\}.
\]

Now, by the assumptions in the call to \(\textsc{PartialLayerAssignmentTree}\) and by Lemma~\ref{lem:upper_bound_out_degree} (applied to the tree \(T_{v'}^{(s)}\) and its valid mapping \(map_{v'}^{(s)}\)), we know that
\[
\Bigl|\{\,u\in N_G(v):\,\widetilde{\ell}_{v'}(u)\ge \ell_{v'}(x)\,\}\Bigr| \;\le\; (s+1)\cdot k.
\]
It follows that
\[
\Bigl|\{\,u\in N_G(v):\,\ell(u)\ge \ell(v)\,\}\Bigr| \;\le\; (s+1)\cdot k.
\]
This completes the proof of the claim.
\end{proof}
}
\begin{lemma}
\label{lem:partial_assignment}
Let \(G\) be a graph and let \(B,k\in\mathbb{N}\) satisfy
$
k^{100} \le B \le n^{\delta/100} \quad \text{and} \quad k \ge 100\cdot \lambda(G)$,
where \(n\) and \(m\) are upper bounds on the number of vertices and edges of \(G\), respectively. Then there exists an MPC algorithm that runs in \(O(\log\log n)\) rounds with \(O(n^\delta)\) local memory and \(O(nB + m)\) words of global memory, and which computes a partial layer assignment
$
\ell\colon V(G) \to \{1,2,\dots,\lceil 0.1\log_k(B)\rceil\}\cup\{\infty\}
$
with the following properties:
\begin{enumerate}
    \item The out-degree of \(\ell\) is at most \(\lceil 100\log\log(n)\rceil\cdot k\).
    \item For every \(j\in \{1,2,\dots,\lceil 0.1\log_k(B)\rceil\}\),
    $
    \Bigl|\{v\in V(G) : \ell(v)\ge j\}\Bigr| \le 0.5^{\,j-1}\cdot |V(G)|.
    $
\end{enumerate}
\end{lemma}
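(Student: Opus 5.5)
We run \cref{alg:partial_layer_assignment}, i.e.\ $PartialLayerAssignment(G,B,k,L,s)$, with $L:=\lceil 0.1\log_k(B)\rceil$ and $s:=\lceil 100\log\log n\rceil - 1$. Since $B\le n^{\delta/100}$, we have $L=O(\log n)$. Moreover $s>\log_2 L$ for large $n$, as $\log_2 L\le \log\log n$.

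\textbf{Efficiency.} \cref{claim:MPC_implementation_PLA} gives $O(s)=O(\log\log n)$ rounds. Local memory is $O(n^\delta+B)=O(n^\delta)$ and global memory is $O(nB+m)$.

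\textbf{Out-degree.} \cref{claim:out_degree} directly gives out-degree at most $(s+1)k\le\lceil 100\log\log n\rceil\cdot k$, which is property~1.

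\textbf{Layer sizes (property~2).} This is the main step. We compare the output $\ell$ with a reference assignment $\ell_G$ given by the Barenboim--Elkin peeling with threshold $d:=4\lambda(G)\le k$. Iteratively, for $j=1,\dots,L$, put into layer $j$ every remaining vertex with at most $d$ remaining neighbours, and give the vertices left after $L$ rounds value $\infty$. This is a partial layer assignment with out-degree $d$.

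Because every subgraph has average degree at most $2\lambda$, Markov's inequality shows that at least half of the remaining vertices have degree at most $4\lambda$. Hence
\[
|\{v:\ell_G(v)\ge j\}|\le 0.5^{j-1}|V(G)|.
\]
It therefore suffices to show that $\ell(v)\le\ell_G(v)$ for every vertex with $\ell_G(v)\neq\infty$ and $NumPathsIn_{G,\ell_G}(v)\le\sqrt B$. For such $v$, \cref{lem:partiallayerassignmenttree} (using $k\ge d$, $s>\log_2 L$ and $a=(s+1)k$) gives $\ell_v(r_v)\le \ell_G(v)$. Then $\ell(v)\le\ell_v(r_v)$ because $\ell$ takes a minimum.

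\textbf{Counting the bad vertices.} It remains to bound the vertices that are ``bad'': those with $\ell_G=\infty$ or with $NumPathsIn>\sqrt B$. On the set $\{\ell_G\neq\infty\}$, restrict $\ell_G$ and apply \cref{lem:double_counting}. This gives total path count at most $n d^L\le n k^{0.1\log_k B+1}\le n B^{0.11}$. Markov's inequality then bounds the number of vertices with $NumPathsIn>\sqrt B$ by $nB^{-0.39}$.

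This absolute bound is not enough on its own. The bound needed is $|\{\ell\ge j\}|\le 0.5^{j-1}n$, and it is smallest at $j=L$, where it equals $0.5^{L-1}n\ge n\,B^{-0.1/\log_2 k}$. The exceptional count $nB^{-0.39}$ must fit inside this slack. I would handle it by using a slightly stronger peeling threshold, say $d=8\lambda$, which gives decay $(1/4)^{j-1}$. Then
\[
|\{\ell\ge j\}|\le (1/4)^{j-1}n+nB^{-0.39}\le 0.5^{j-1}n,
\]
since $(1/4)^{j-1}\le 0.5^{j-1}/2$ for $j\ge2$ and $nB^{-0.39}\le 0.5^{L}n$ by the choice of $L$ (the case $j=1$ is trivial).

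The vertices with $\ell_G=\infty$ are also handled by the stronger decay: after $L$ peeling rounds, at most $4^{-L}n$ of them remain.

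The delicate point is this bookkeeping of constants, that is, checking $d^L\ll\sqrt B$ while keeping the exceptional set below $0.5^{L}n$. The slack $k^{100}\le B$ and $k\ge100\lambda$ are exactly what make these inequalities go through.
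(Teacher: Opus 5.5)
Your proposal is correct and matches the paper's proof. Both run \textsc{PartialLayerAssignment} with $L=\lceil 0.1\log_k(B)\rceil$ and $s=\Theta(\log\log n)$, take property~1 from \cref{claim:out_degree}, and get property~2 by comparing with a reference peeling $\ell_G$ through \cref{lem:partiallayerassignmenttree}, bounding the vertices with $\mathrm{NumPathsIn}_{G,\ell_G}>\sqrt{B}$ via \cref{lem:double_counting} and Markov's inequality. The differences are cosmetic: the paper peels with threshold $k$ and gets decay $0.1^{j-1}$, where you peel with $8\lambda$ and get $4^{-(j-1)}$, and you spell out the restriction to $\{\ell_G\neq\infty\}$ and the final constant check, which the paper leaves implicit.
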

\begin{proof}
We begin by executing the algorithm
\[
\textsc{PartialLayerAssignment}(B,k,L,s)
\]
with parameters
$
L := \lceil 0.1\log_k(B)\rceil \quad \text{and} \quad s :=\lceil 10\log\log(n)\rceil.
$
Let \(\ell\) denote the resulting partial layer assignment. Then, by \cref{claim:out_degree}, the out-degree of \(\ell\) is at most
\[
(s+1)\cdot k = \bigl(\lceil 10\log\log(n)\rceil + 1\bigr) k \le \lceil 100\log\log(n)\rceil\, k,
\]
which establishes property (1).

It remains to prove that for every \(j\in \{1,2,\dots, L\}\) we have
\[
\Bigl|\{v\in V(G):\ell(v)\ge j\}\Bigr| \le 0.5^{\,j-1}\cdot |V(G)|.
\]
To this end, define the auxiliary layer assignment \(\ell_G\colon V(G)\to \{1,2,\dots, L\}\cup\{\infty\}\) as follows. In each of the \(L\) iterations, remove from the current graph all vertices of degree at most \(k\) and assign to each removed vertex the index of the iteration in which it was removed. A straightforward calculation shows that, using the assumption \(k\ge 100\cdot\lambda(G)\), for every \(j\in \{1,\dots,L\}\) we have
\[
\Bigl|\{v\in V(G):\ell_G(v)\ge j\}\Bigr| \le 0.1^{\,j-1}\cdot |V(G)|.
\]

Now, fix any \(j\in \{2,3,\dots,L\}\). By \cref{lem:partiallayerassignmenttree} together with \cref{lem:double_counting} we obtain
\begin{align*}
\Bigl|\{v\in V(G):\ell(v)\ge j\}\Bigr| 
&\le \Bigl|\{v\in V(G):\ell_G(v)\ge j\}\Bigr| \shortOnly{\\
&+} \Bigl|\Bigl\{v\in V(G): \operatorname{NumPathsIn}_{G,\ell_G}(v)\ge \sqrt{B}\Bigr\}\Bigr|\\[1mm]
&\le 0.1^{\,j-1}\cdot |V(G)| + \frac{k^L}{\sqrt{B}}\cdot |V(G)|\\[1mm]
&\le 0.5^{\,j-1}\cdot |V(G)|.
\end{align*}
\end{proof}
\begin{lemma}
\label{lem:partial_assignment_2}
Let \(G\) be a graph and let \(B,k\in\mathbb{N}\) satisfy
$
k^{100} \le B \le n^{\delta/100} \quad \text{and} \quad k \ge 100\cdot \lambda(G)$,
where \(n\) and \(m\) are upper bounds on the number of vertices and edges of \(G\), respectively. Then there exists an MPC algorithm that runs in \(O(\log(k)\log\log (n))\) rounds with \(O(n^\delta)\) local memory and \(O(nB + m)\) words of global memory, and which computes a partial layer assignment
$
\ell\colon V(G) \to \{1,2,\dots,\lceil 100\log(B)\rceil\}\cup\{\infty\}
$
with the following properties:
\begin{enumerate}
    \item The out-degree of \(\ell\) is at most \(\lceil 100\log\log(n)\rceil\cdot k\).
    \item For every \(j\in  \{1,2,\dots,\lceil 100\log(B)\rceil\),
    $
    \Bigl|\{v\in V(G) : \ell(v)\ge j\}\Bigr| \le 0.5^{\,j-1}\cdot |V(G)|.
    $
\end{enumerate}
\end{lemma}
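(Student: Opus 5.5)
We obtain \cref{lem:partial_assignment_2} by applying \cref{lem:partial_assignment} repeatedly, each time to the subgraph of vertices that are still unassigned, and stacking the resulting layer assignments on top of each other. Set $L_0 := \lceil 0.1\log_k(B)\rceil$; one call to \cref{lem:partial_assignment} yields $L_0$ layers. We perform $R := \lceil 1000\log(k)\rceil$ phases, which gives $R\cdot L_0 \ge 100\log(B)$ layers in total. This count is what produces the $O(\log(k)\log\log n)$ round bound.

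\textbf{Phase structure.} Let $G_1 := G$. In phase $p$ we run \cref{lem:partial_assignment} on $G_p$ with the same $B$ and $k$. The hypotheses still hold: $\lambda(G_p)\le\lambda(G)$ because $G_p$ is an induced subgraph, and $n,m$ remain valid upper bounds. This returns $\ell^{(p)}\colon V(G_p)\to[L_0]\cup\{\infty\}$. For every $v\in V(G_p)$ with $\ell^{(p)}(v)\neq\infty$ we set
\[
\ell(v) := (p-1)L_0+\ell^{(p)}(v),
\]
and we let $G_{p+1}$ be the subgraph induced by the vertices with $\ell^{(p)}(v)=\infty$. Vertices still unassigned after all $R$ phases get $\ell(v)=\infty$. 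We cap the layer indices at $\lceil 100\log B\rceil$; this is harmless, because any vertex in a layer above the cap is pushed to $\infty$, and that can only help the out-degree condition once we check the tail bound. Each phase costs $O(\log\log n)$ rounds and $O(nB+m)$ global memory, and the graphs shrink across phases, so both resource bounds follow.

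\textbf{Out-degree.} Take $v$ assigned in phase $p$. Its neighbours with $\ell(u)\ge\ell(v)$ fall into two groups. Those assigned in earlier phases all have $\ell(u)\le (p-1)L_0<\ell(v)$, so they do not count. The remaining candidates are exactly the neighbours of $v$ inside $G_p$ with $\ell^{(p)}(u)\ge\ell^{(p)}(v)$; this includes the neighbours with $\ell^{(p)}(u)=\infty$, which later receive larger layers or $\infty$. Property 1 of \cref{lem:partial_assignment} in $G_p$ bounds their number by $\lceil 100\log\log n\rceil k$.

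\textbf{Decay.} This is the main obstacle. Write $j=(p-1)L_0+j'$. Property 2 in $G_p$ gives, for $j'\ge 1$,
\[
|\{v:\ell(v)\ge j\}|\le |\{v\in V(G_p):\ell^{(p)}(v)\ge j'\}| \le 0.5^{j'-1}|V(G_p)|.
\]
Applying this with $j'=L_0+1$, i.e.\ to the vertices with $\ell^{(p)}=\infty$, gives $|V(G_{p+1})|\le 0.5^{L_0}|V(G_p)|$, since these are counted by $j'=L_0+1$. Chaining over phases yields $|V(G_p)|\le 0.5^{(p-1)L_0}|V(G)|$, and combined with the display above this gives $0.5^{j-1}|V(G)|$.

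The catch is that \cref{lem:partial_assignment} states property 2 only for $j'\le L_0$. The case $j'=L_0+1$ has to be recovered from its proof: the auxiliary peeling bound $0.1^{L_0}$ plus the path-count term $k^{L_0}/\sqrt B$ is still at most $0.5^{L_0}$, because $k^{L_0}\le B^{0.1+o(1)}$, which is much smaller than $\sqrt B$. If this re-derivation fails, the fallback is to accept losing one layer per phase and use the guaranteed ratio $0.5^{L_0-1}$. To keep the final constant $0.5$ in that case, we would run the $R$ phases with $L_0$ replaced by a slightly smaller layer count, or simply rely on the slack in $R$.
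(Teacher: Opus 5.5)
Your proposal is correct and follows the paper's route: you apply \cref{lem:partial_assignment} repeatedly to the induced subgraph of still-unassigned vertices, shift each phase's layers by $(p-1)L_0$, and use $O(\log k)$ phases. Your out-degree argument is more explicit than the paper's. You also correctly notice that the statement of \cref{lem:partial_assignment} only gives a per-phase shrinkage of $0.5^{L_0-1}$, so the $0.5^{L_0}$ you need has to be recovered from that lemma's proof, as you do; the paper simply asserts $|U_i|\le 0.5^{(i-1)L_0}|V(G)|$ without addressing this.
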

\begin{proof}
The main idea is to iteratively apply the algorithm of \Cref{lem:partial_assignment} on the unassigned vertices. We now describe the process in detail.

\medskip
\noindent\textbf{Setup.}  
Let
\[
T \;=\; \Bigl\lceil \frac{\lceil100\log(B)\rceil}{\lceil 0.1\log_k(B)\rceil} \Bigr\rceil \;=\; O(\log k)
\]
be the number of iterations we will perform. We work in rounds indexed by \(i=1,\dots, T\). In round \(i\) we consider the subgraph induced by the vertices that are still unassigned (i.e. currently assigned \(\infty\)). Denote these vertices by
\[
U_i \;=\; \{ v\in V(G): \ell(v)=\infty \}\,,
\]
with the convention that \(U_1=V(G)\).

\medskip
\noindent\textbf{Iteration.}  
In round \(i\), we run the MPC algorithm from \Cref{lem:partial_assignment} on the subgraph \(G[U_i]\) with parameters \(B\) and \(k\). This produces a partial assignment
\[
\ell_i \colon U_i \to \{1,2,\dots,\lceil 0.1\log_k(B)\rceil\}\cup\{\infty\}
\]
satisfying:
\begin{enumerate}
    \item The out-degree of \(\ell_i\) is at most \(\lceil 100\log\log(n)\rceil\cdot k\).
    \item For every \(j\in \{1,2,\dots,\lceil 0.1\log_k(B)\rceil\}\),
    $
    \Bigl|\{v\in U_i: \ell_i(v)\ge j\}\Bigr|\le 0.5^{\,j-1}|U_i|.
    $
\end{enumerate}
We then assign a final layer to each vertex \(v\in U_i\) that receives a finite layer in round \(i\) by setting
\[
\ell(v) \;=\; (i-1)\cdot \lceil 0.1\log_k(B)\rceil \;+\; \ell_i(v).
\]
(assuming $(i-1)\cdot \lceil 0.1\log_k(B)\rceil \;+\; \ell_i(v) \leq \lceil 100 \log(B)\rceil)$.
Vertices for which \(\ell_i(v)=\infty\) remain unassigned and are carried over to the next round (i.e. they belong to \(U_{i+1}\)).

\medskip
\noindent\textbf{Out-Degree.}  
Each round produces an assignment \(\ell_i\) with an out-degree at most \(\lceil 100\log\log(n)\rceil\cdot k\), and therefore the out-degree of the final assignment also is at most \(\lceil 100\log\log(n)\rceil\cdot k\).

\medskip
\noindent\textbf{Decay Property.}  
Fix an integer \(j\) with \(1\le j\le \lceil 100\log(B)\rceil\). Write
\[
j \;=\; (i-1)\cdot \lceil 0.1\log_k(B)\rceil + j'\,,
\]
where \(1\le j'\le \lceil 0.1\log_k(B)\rceil\) and \(i\in \{1,2,\dots,T\}\). In round \(i\), the decay property for \(\ell_i\) implies that at most a fraction \(0.5^{\,j'-1}\) of the vertices in \(U_i\) receive a value at least \(j'\). Furthermore, the process that carries over unassigned vertices from round \(i-1\) to round \(i\) ensures that
\[
|U_i| \;\le\; 0.5^{(i-1)\cdot \lceil 0.1\log_k(B)\rceil}\,|V(G)|.
\]
Thus, the number of vertices \(v\) with final layer
\[
\ell(v)\ge j \text{ (i.e., those with either \(v\in U_i\) or with \(\ell_i(v)\ge j'\) in round \(i\))},
\]
is at most
\[
0.5^{\,j'-1}\cdot 0.5^{(i-1)\cdot \lceil 0.1\log_k(B)\rceil}\cdot |V(G)| \;\le\; 0.5^{\,j-1}|V(G)|.
\]

\medskip
\noindent\textbf{Round Complexity and Memory.}  
Since each invocation of the algorithm from \Cref{lem:partial_assignment} runs in \(O(\log\log n)\) rounds and we perform \(T=O(\log k)\) sequential rounds, the overall round complexity is
\[
O\bigl(\log(k)\log\log n\bigr).
\]
The local memory and global memory bounds remain \(O(n^\delta)\) and \(\tilde{O}(nB+m)\) respectively, as in \Cref{lem:partial_assignment}.

\medskip
Thus, the final partial layer assignment
\[
\ell\colon V(G) \to \{1,2,\dots,\lceil 100\log(B)\rceil\}\cup\{\infty\}
\]
satisfies the required out-degree and decay properties, and the MPC algorithm runs in \(O(\log(k)\log\log n)\) rounds with the stated memory bounds.
\end{proof}
\begin{lemma}
\label{lem:complete_layer_assignment}
Let \(G\) be a graph and let \(k\in\mathbb{N}\) satisfy
$ k \ge 100\cdot \lambda(G)$,
where \(n\) and \(m\) are upper bounds on the number of vertices and edges of \(G\), respectively. Then there exists an MPC algorithm that runs in \(O(\log(k)\log^2\log (n))\) rounds with \(O(n^\delta)\) local memory and \(O(n + m)\) words of global memory, and which computes a complete layer assignment
$
\ell\colon V(G) \to \mathbb{N}
$
with the following properties:
\begin{enumerate}
    \item The out-degree of \(\ell\) is at most \(\lceil 100\log\log(n)\rceil\cdot k\).
    \item For every \(j\in  \mathbb{N}\),
    $
    \Bigl|\{v\in V(G) : \ell(v)\ge j\}\Bigr| \le 0.5^{\,j-1}\cdot |V(G)|.
    $
\end{enumerate}
\end{lemma}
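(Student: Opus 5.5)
The plan is to apply \Cref{lem:partial_assignment_2} repeatedly, each time on the subgraph induced by the still-unassigned vertices. Each application will use a budget $B$ that grows doubly exponentially from one phase to the next. The first application will use $B$ of size $\poly(k)$, so that after $\Theta(\log B)$ layers only a $B^{-\Omega(1)}$ fraction of the vertices remains.

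The budget $B$ cannot be taken as large as $n^{\delta/100}$ right away, because the global memory is $O(nB+m)$ and the claimed bound is only $O(n+m)$. However, the memory for phase $t$ is really $O(|U_t|B_t + m)$, where $U_t$ is the set of remaining vertices. I will therefore start with $B_1 := k^{100}$. In phase $t$ I set $B_{t+1}$ so that $|U_{t+1}|\, B_{t+1} = O(n)$. Since $|U_{t+1}| \le 2^{-\Omega(\log B_t)}|U_t|$, this allows $B_{t+1} = B_t^{c}$ for a constant $c>1$ (say $c=2$, after adjusting constants). I will cap the budget at $B_t \le n^{\delta/100}$. In the lemma, $n$ upper-bounds the vertex count, so applying it to $G[U_t]$ with the original $n$ is fine as long as $B_t \ge k^{100}$ and $B_t \le n^{\delta/100}$ hold. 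With squaring, the cap $n^{\delta/100}$ is reached after $O(\log\log n)$ phases.

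Once the cap is reached, each further phase with $B = n^{\delta/100}$ shrinks the remaining set by a factor of $n^{-\Omega(\delta)}$. So after $O(1/\delta) = O(1)$ more phases no vertices remain. This holds because the count bound $0.5^{j-1}|U|$ with $j = \lceil 100\log B\rceil$ gives fewer than one vertex once $|U| \cdot n^{-\Omega(1)} < 1$. The total number of phases is $O(\log\log n)$, and each costs $O(\log k \log\log n)$ rounds, which gives $O(\log k \log^2\log n)$ overall. Memory usage per phase is $O(nB_t\cdot |U_t|/n + m) = O(n+m)$ by construction.

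The layers will be offset as in the proof of \Cref{lem:partial_assignment_2}. Layer $j$ of phase $t$ becomes global layer $J_{t-1}+j$, where $J_{t-1}$ is the total number of layers used in earlier phases. The out-degree condition carries over exactly as before. Edges to vertices assigned in earlier phases point to strictly lower layers and do not count. Edges within the same phase are controlled by the partial assignment's out-degree $\lceil 100\log\log n\rceil k$. Vertices assigned in later phases lie in higher layers, but they are precisely the vertices that were still unassigned, that is, in $G[U_t]$ with value $\infty$. The out-degree definition of a partial assignment already counts $\infty$-neighbors as $\ge$, so these are covered.

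For the decay property, I chain the geometric bounds. Within phase $t$, the set $\{\ell\ge J_{t-1}+j\}$ has size at most $0.5^{j-1}|U_t|$. Inductively, $|U_t|\le 0.5^{J_{t-1}}|V(G)|$, since in the previous phase the number of vertices with $\ell_t \ge J_t - J_{t-1}$ plus the unassigned ones is bounded by the final-layer bound. A little care is needed here: $U_{t+1}$ is bounded by the count at the last layer index, giving $0.5^{(J_t-J_{t-1})-1}$. That leaves an off-by-one per phase. I will absorb it by discarding the last layer of each phase, using $\lceil 100\log B\rceil - 1$ layers and pushing the top-layer vertices into $U_{t+1}$. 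Alternatively, I can state the bound with slightly weaker constants.

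The main obstacle is the bookkeeping linking budget growth to global memory. I have to verify that $B_{t+1}|U_{t+1}| = O(n)$ remains valid under the required constraint $B\ge k^{100}$. I must also check that the increase is genuinely doubly exponential, so that the number of phases is $O(\log\log n)$ rather than $O(\log n/\log k)$.
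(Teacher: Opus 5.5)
Your overall scheme matches the paper's Stage~2. You repeatedly call \Cref{lem:partial_assignment_2} on the still-unassigned vertices, with a budget that grows doubly exponentially, and you charge phase $t$ only $O(|U_t|B_t+m)$ global memory. Your out-degree argument is correct. Discarding the top layer of each phase also correctly fixes the off-by-one in the decay chain; the paper glosses over that point.

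The gap is the base case of your memory invariant $|U_t|B_t=O(n)$. In the first phase $U_1=V(G)$, and \Cref{lem:partial_assignment_2} requires $B\ge k^{100}$, so you cannot take $B_1$ below $k^{100}$. Phase~1 alone therefore uses $\Theta(nk^{100}+m)$ words of global memory. The edge count is at most $\lambda(G)\,n\le kn/100$, so the $m$ term does not absorb this. For non-constant $k$ it is not $O(n+m)$. Your ``by construction'' memory bound only holds from phase~$2$ on. The check you defer at the end, that $B_{t+1}|U_{t+1}|=O(n)$ survives the constraint $B\ge k^{100}$, is exactly what fails at the start.

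The missing ingredient is a cheap preprocessing step that shrinks the vertex set by a factor $k^{100}$ before the first call. The paper uses plain peeling: for $\lceil 100\log k\rceil$ rounds, remove all vertices whose degree in the remaining graph is at most $k$, and give the vertices removed in round $i$ layer $i$.

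\begin{itemize}
\item Every subgraph has average degree below $2\lambda(G)\le k/50$, so each round removes at least half of the remaining vertices.
\item Hence these layers have out-degree at most $k$ and satisfy the decay bound.
\item Each round takes $O(1)$ MPC rounds with $O(n+m)$ memory.
\item The surviving set $U$ satisfies $|U|\le n/k^{100}$.
\end{itemize}

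Then start your phases on $G[U]$ with $B_1=k^{100}$ and offset all later layers by $\lceil 100\log k\rceil$. This gives $|U_1|B_1\le n$, after which your schedule and invariant go through as written. The only extra cost is an additive $O(\log k)$ rounds.
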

\begin{proof}
The complete layer assignment is obtained by first “peeling” the graph to reduce the number of vertices, and then iteratively applying the partial assignment algorithm of \Cref{lem:partial_assignment_2} with a boosting procedure. We describe the process in three stages.

\medskip
\textbf{Stage 1. Initial Peeling.}  
In order to eventually run the partial assignment algorithm with a large budget per vertex, we first perform $\lceil 100 \log(k)\rceil$ rounds of a simple peeling procedure on \(G\) as follows. In each peeling round, remove from the current graph all vertices of degree at most \(k\). Standard arguments show that in every such round at least half of the vertices are removed. Hence, after $\lceil 100 \log(k)\rceil$ rounds, the set \(U\) of remaining vertices satisfies
\[
|U| \;\le\; \frac{n}{k^{100}}.
\]
Thus, on the subgraph \(G[U]\) the effective per-vertex budget is at least \(B_0 = k^{100}\).

\medskip
\textbf{Stage 2. Partial Assignment with Budget Boosting.}  
We now work on the subgraph \(G[U]\) (with \(U\subseteq V(G)\)) and iteratively assign layers to its vertices. In each \emph{phase} \(i\) we maintain a current set \(U_i\) of vertices that have not yet been assigned a finite layer, and a budget \(B_i\) (initially, \(U_0=U\) and \(B_0=k^{100}\)). In phase \(i\), we run the MPC algorithm from \Cref{lem:partial_assignment_2} on \(G[U_i]\) with parameters \(B_i\) and \(k\). This produces a partial layer assignment 
\[
\ell_i\colon U_i \to \{1,2,\dots,\lceil 100\log(B_i)\rceil\}\cup\{\infty\},
\]
which satisfies:
\begin{enumerate}
    \item The out-degree is at most \(\lceil 100\log\log(n)\rceil\cdot k\).
    \item For every \(j\in\{1,\dots,\lceil 100\log(B_i)\rceil\}\),
    \[
    \Bigl|\{v\in U_i:\ell_i(v)\ge j\}\Bigr|\le 0.5^{\,j-1}|U_i|.
    \]
\end{enumerate}
In particular, setting \(j=\lceil 100\log(B_i)\rceil\) we deduce that the number of vertices that remain unassigned after phase \(i\) is
\[
|U_{i+1}| \;\le\; \frac{|U_i|}{B_i^{100}}.
\]
For every vertex \(v\in U_i\) that receives a finite layer (i.e. \(\ell_i(v)\neq \infty\)), we define its final layer as
\[
\ell(v) \;=\; L_i + \ell_i(v),
\]
where \(L_i\) is an offset chosen to ensure that layers assigned in different phases are disjoint. For example, one may set
\[
L_i \;=\; \lceil 100\log(k)\rceil + \sum_{t=0}^{i-1}\lceil 100\log(B_t)\rceil.
\]

After phase \(i\), we \emph{boost} the budget by setting
\[
B_{i+1} \;=\; \min(B_i^{100}, n^\delta).
\]

After $O(\log \log n)$ phases, we have $U_T = \emptyset$, so every vertex in $U$ (and hence in $V(G)$) has been assigned a finite layer.

\medskip
\textbf{Stage 3. Complexity and Verification.}  
In Stage 1, the peeling procedure takes \(O(\log k)\) rounds. In Stage 2, each phase runs the partial assignment algorithm from \Cref{lem:partial_assignment_2} in \(O(\log(k)\log\log(n))\) rounds. Since there are \(T = O(\log \log n)\) phases, the total round complexity is
\[
O\Bigl(\log k + \log \log(n) \cdot \log(k)\log\log(n)\Bigr) = O\bigl(\log(k)\log^2\log(n)\bigr).
\]
Furthermore, in each phase the local memory requirement is \(O(n^\delta)\) and the global memory used is \(O(|U_i|B_i + m) = O(n+m)\) words. 
Finally, the final complete layer assignment \(\ell\colon V(G)\to \mathbb{N}\) inherits the following properties from the partial assignments:
\begin{enumerate}
    \item The out-degree is at most \(\lceil 100\log\log(n)\rceil\cdot k\).
    \item For every \(j\in\mathbb{N}\), a simple induction over the phases shows that
    \[
    \Bigl|\{v\in V(G) : \ell(v)\ge j\}\Bigr|\le 0.5^{\,j-1}\cdot |V(G)|.
    \]
\end{enumerate}

This completes the proof.
\end{proof}

\MainThm*
\begin{proof}
Using an extra $O(\log(n))$ factor in the global memory, we can assume that we are given $k$ with $k \in [100\lambda(G), 200 \lambda(G)]$. First, consider the case that $k = \log(n)^{O(\log \log n)}$. In that case, we can use \cref{lem:partial_assignment_2} to compute a complete layering $\ell$ with out-degree $O(k \log \log n)$ in $\poly(\log \log n)$ rounds. Then, we can get the desired orientation with out-degree $O(\lambda(G) \log \log n)$ by orienting edges from the lower to the higher layer, breaking ties arbitrarily. If $k \geq \log(n)^{O(\log \log n)}$, we start by computing a random edge partitioning into $P = \Theta(k/log(n))$ parts $E_1, E_2, \ldots, E_P$. \cref{lem:edge_partition} gives that the arboricity of the graph $G_i = (V(G),E_i)$ is $O(\log n)$. Hence, we can use \cref{lem:partial_assignment_2} to orient the edges in $E_i$ with outdegree $O(\log n \log \log n)$ in $\poly(\log n)$ rounds. Then, combining the $P$ edge orientations gives an edge orientation of $G$ with out-degree $O(\lambda(G) \log \log n)$.
\end{proof}

\fullOnly{

\section{Coloring Algorithm}

\ColoringThm*
\begin{proof}
We present the proof in several steps. Using an extra $O(\log(n))$ factor in the global memory, we can assume that we are given $k$ with $k \in [100\lambda(G), 200 \lambda(G)]$. 

\bigskip
\noindent \textbf{Random partitioning, if needed.} First, we transform our problem into coloring in graphs with arboricity at most $O(\log n)$: If $k\leq O(\log n)$, there is nothing to do here. Otherwise, we apply a random vertex partitioning, splitting vertices into $P=\Theta(k/\log n)$ parts $V_1, V_2, \ldots, V_P$. By \Cref{lem:vertex_partition}, we know that with high probability, the subgraph induced by each part has arboricity $O(\log n)$. We then will solve the coloring problem of each part separately, using different palettes of size $O((k/P) \log\log n)$, hence obtaining an $O(\lambda \log\log n)$ coloring for the overall graph. Given this, for the rest of this proof, we focus on coloring a graph where the arboricity is at most $O(\log n \log \log n)$. 

\bigskip
\noindent \textbf{Layering and orientation.} We apply \Cref{lem:complete_layer_assignment} to compute the $L=\Theta(\log n)$-layer partition $V=H_1\sqcup H_2 \ldots \sqcup H_{L}$ with outdegree at most $d=O(\lambda \log\log n)$, satisfying the property that $\cup_{i=j}^L|H_i| \leq 2^{-j+1} \cdot n$ for every $j \in [L]$. This works in at most $O(\log^3\log n)$ rounds of MPC. We color this graph using $3d$ colors. 

\bigskip
\noindent \textbf{LOCAL model coloring.} Given the layering computed, from the viewpoint of the LOCAL model, we could color this graph in $\Theta(\log n)$ phases, where in phase $i$, the task is to compute a coloring for vertices in $H_{L-i+1}$, which for each node $v$ avoids the colors of its neighbors in higher layers. This is simply a  $degree+1$ list coloring problem, on the graph induced by $H_i$ (in fact, the number of available colors will be at least $2d$ where $d=\Theta(\lambda \log\log n)$). This scheme would start by coloring the last layer $H_L$, it would proceed layer by layer, and would end with coloring the nodes in $H_1$. Each phase would take $O(\log^{1.67}\log n)$ LOCAL rounds, thanks to the state of the art LOCAL model algorithm~\cite{halldorsson2022near,ghaffari2024near}. Hence, the overall algorithm would take $O(\log n \cdot \log^{1.67}\log n)$ rounds.

\bigskip
\noindent \textbf{Faster simulation in MPC.} We simulate the above LOCAL algorithm much faster in MPC, using a \textit{directed graph exponentiation}: Suppose edges inside each layer are duplicated as bidirectional edges, and edges across layers are directed toward the higher layer endpoint. 
Also, assume we have already determined the color of each node in $H_j \sqcup H_{j+1} \sqcup \ldots \sqcup H_{L}$. We would now like to determine the color of each node in $H_{j'} \sqcup H_{j'+1} \sqcup \ldots \sqcup H_{j-1}$, for some $j'$. To do so, it suffices for each node $v \in H_{j'} \sqcup H_{j'+1} \sqcup \ldots \sqcup H_{j-1}$  to learn all nodes that are reachable from $v$ along directed paths of distance at most $O\left((j-j')\log^{1.67}\log(n)\right)$, along with their color in case they belong to layers $H_j \sqcup H_{j+1} \sqcup \ldots \sqcup H_{L}$. Using that $d = O(\lambda \log \log n)$ and $\lambda = O(\log n)$, each node $v \in  H_{j'} \sqcup H_{j'+1} \sqcup \ldots \sqcup H_{j-1}$ needs to learn at most $d^{O\left((j-j')\log^{1.67}\log(n)\right)} = 2^{O\left((j-j')\log^{2.67}\log(n)\right)}$ many nodes. In particular, as long as $j-j' = \Theta(\delta\log(n)/\log^{2.67}\log(n))$, with a suitably small constant, each node $v$ needs to learn $O(n^\delta)$ nodes and thus the local memory constraint is satisfied. In order to satisfy the global memory constraint, it suffices to ensure that

\[\frac{n}{2^{j'}} \cdot 2^{O\left((j-j')\log^{2.67}\log(n)\right)} = O(n),\]

as $\cup_{i=j'}^L|H_i| \leq 2^{-j'+1} \cdot n$. This constraint is satisfied as long as

\[j-j' = O\left(\frac{\delta\cdot j}{\log^{2.67} \log (n)}\right).\]

Thus, by using directed graph exponentiation along outgoing edges (see \cite[Definition 3.3]{latypov2021coloring} for lower level details, or \Cref{lem:directedExp} for a sketch), we can compute in $O(\log \log n)$ rounds the color of a $\frac{1}{\Theta(\log^{2.67} \log n)}$-fraction of layers for which we have not yet computed their color, assuming that $j \geq \Theta(\log^{2.67} \log n)$. Since there are $O(\log n)$ layers in total, we need to repeat this $O(\log^{3.67} \log n)$ times, which takes $O(\log^{4.67} \log n)$ MPC rounds. Once $j\leq \Theta(\log^{2.67} \log n)$, we simply run the LOCAL model algorithm phase by phase, without a speed up. and that takes $\Theta(\log^{2.67} \log n) \cdot \Theta(\log^{1.67} \log n)= \Theta(\log^{4.34} \log n)$ additional rounds. Hence, the complete round complexity is $O(\log^{4.67} \log n)$.

\end{proof}

\begin{lemma}[Directed exponentiation]\label{lem:directedExp} For any constant $\delta>0$, we have the following: Suppose each node $v$ has an information bundle $B_v$ of $b_v\leq n^{\delta/2}$ words, which others might be interested to learn, and each node $u$ is interested in receiving the information bundles from a list $L_u$ of nodes. Suppose also that (A) $|L_u|\leq n^{\delta/2}$ and (B) $\sum_{u\in V} \sum_{v\in L_u} |b_v| = O(m+n).$ Then, there is an MPC algorithm with $n^{\delta}$ local memory and $O(m+n)$ global memory that performs this task in $O(1)$ rounds.
\end{lemma}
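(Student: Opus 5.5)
The plan is to reduce the task to a constant number of sorting and aggregation steps on a set of request tuples, using the standard scalable-MPC primitives (sorting, prefix sums, duplication/broadcast along sorted runs) that run in $O(1)$ rounds with $n^{\delta}$ local memory and linear global memory, e.g.\ [\cite{andoni2018parallelgraphconnectivitylog}, Section E].

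\textbf{Step 1: materialize the requests.} For every node $u$ and every $v\in L_u$, create a request tuple $(v,u)$. By (A) each $u$ issues at most $n^{\delta/2}$ requests, so the list $L_u$ fits on one machine and can be written out there. The total number of tuples is $\sum_u |L_u|$. This is $O(m+n)$ by (B), provided bundles are nonempty; we can pad each bundle to at least one word. Sorting all tuples by their first coordinate $v$ places the requests addressed to the same $v$ in a contiguous run. A prefix-sum computation then gives each tuple its rank inside its run and the run length $c_v$.

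\textbf{Step 2: replicate the bundles.} Also sort the bundle records $(v,B_v)$ together with the request tuples, so that $B_v$ sits immediately before its run. We then need $c_v$ copies of $B_v$, one attached to each tuple $(v,u)$, with total size $\sum_v c_v b_v=\sum_u\sum_{v\in L_u} b_v=O(m+n)$, again by (B). The obstacle is that $c_v$ is unbounded, since many nodes may request the same $v$, so the copying cannot be done by one machine. Instead we allocate global memory positions by prefix sums over the values $c_v b_v$ and use the standard $O(1)$-round broadcast along sorted runs. This is a tree of fan-out $n^{\delta/2}$ and depth $O(1/\delta)=O(1)$. Each machine holding a segment of the run receives one copy of $B_v$ (of size at most $n^{\delta/2}$) and writes it into the slots of its segment. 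The segment lengths are chosen so that a machine's load stays at most $n^{\delta}$ words.

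\textbf{Step 3: deliver.} Re-sort the augmented tuples $(u,v,B_v)$ by the key $u$. All bundles requested by $u$ now form one contiguous block of $\sum_{v\in L_u} b_v\le n^{\delta/2}\cdot n^{\delta/2}=n^{\delta}$ words, which fits on a single machine. Load balancing of the sort keeps this block on at most two adjacent machines, and one more round brings it to the machine holding $u$.

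The main care point is Step 2, the fan-out replication of heavily requested bundles within the local and global memory bounds. It is handled by the known constant-depth broadcast; every other step is a constant number of sorts and prefix sums.
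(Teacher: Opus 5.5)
Your proposal is correct and follows essentially the paper's route. Both count the requesters of each $v$ by sorting, replicate $B_v$ with a constant-depth broadcast tree of fan-out $n^{\delta/2}$, and deliver the copies by sorting; the paper matches the $j$-th sorted request to the $j$-th copy, while you attach copies to the sorted request tuples and re-sort by $u$, which is an immaterial difference. One small remark: padding empty bundles to one word does not by itself preserve (B), so, like the paper, you are really relying on the natural convention $b_v\ge 1$ to get $\sum_u|L_u|=O(m+n)$.
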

\begin{proof}[Proof sketch]
First, we make each node $v$ learn how many nodes $u$ have $v\in L_u$. For that, we apply a constant-round sorting algorithm: $v$ initiates two items $(v.ID, -\infty)$ and $(v.ID, +\infty)$ and each node $u$ that has $v\in L_u$ initiates an item $(v.ID,u)$. Then, we sort all these items lexicographically, using any standard constant round sorting algorithm~\cite{karloff2010mpc, im2023massively, GhaffariMPCNotes}. The difference in the ranks of the two items $(v.ID, -\infty)$ and $(v.ID, +\infty)$, which will be known to the machine that holds $v$, tells $v$ how many nodes $u$ have $v\in L_u$. Let this number be $k_v$.

Second, for each $v$, we generate $k_v$ copies of the information bundle $B_v$, identified with numbers $1$ to $k_v$. We do this simultaneously for all $v$, in constant rounds, using a standard broadcast tree for each of them that increases the number of copies of $B_v$ from $1$ to $k_v$, per iteration by an $n^{\delta/2}$ factor. See e.g., \cite[Section 1.3.2]{GhaffariMPCNotes}. 

Third, we sort two things: (1) an item list $(u, v)$ for each node $u$ and each node $v\in L_u$, indicating that $u$ wants to receive a copy of $B_v$. (2) an item list $(v, i)$ for the $i^{th}$ copy of $B_v$ generated above. Then, we simply perform a matching where node $u$ that has rank $j$ in the first list contacts the machine that holds rank $j$ in the second list and receives the appropriate copy of $B_v$ from it.
\end{proof}

}